\newtheorem{lemma}{Lemma}
\newcommand{\be}{\begin{equation}}
\newcommand{\ee}{\end{equation}}
\newcommand{\bal}{\begin{aligned}}
\newcommand{\eal}{\end{aligned}}	
\newcommand{\ba}{\begin{array}}
\newcommand{\ea}{\end{array}}
\newcommand{\bea}{\begin{eqnarray}}
\newcommand{\eea}{\end{eqnarray}}
\newcommand{\vbar}{\raisebox{.17ex}{\rule{.04em}{1.35ex}}}
\newcommand{\vbarind}{\raisebox{.01ex}{\rule{.04em}{1.1ex}}}
\newcommand{\R}{\ifmmode {\rm I}\hspace{-.2em}{\rm R} \else ${\rm I}\hspace{-.2em}{\rm R}$ \fi}
\newcommand{\T}{\ifmmode {\rm I}\hspace{-.2em}{\rm T} \else ${\rm I}\hspace{-.2em}{\rm T}$ \fi}
\newcommand{\N}{\ifmmode {\rm I}\hspace{-.2em}{\rm N} \else \mbox{${\rm I}\hspace{-.2em}{\rm N}$} \fi}
\newcommand{\B}{\ifmmode {\rm I}\hspace{-.2em}{\rm B} \else \mbox{${\rm I}\hspace{-.2em}{\rm B}$} \fi}
\newcommand{\Hil}{\ifmmode {\rm I}\hspace{-.2em}{\rm H} \else \mbox{${\rm I}\hspace{-.2em}{\rm H}$} \fi}
\newcommand{\C}{\ifmmode \hspace{.2em}\vbar\hspace{-.31em}{\rm C} \else \mbox{$\hspace{.2em}\vbar\hspace{-.31em}{\rm C}$} \fi}
\newcommand{\Cind}{\ifmmode \hspace{.2em}\vbarind\hspace{-.25em}{\rm C} \else \mbox{$\hspace{.2em}\vbarind\hspace{-.25em}{\rm C}$} \fi}
\newcommand{\Q}{\ifmmode \hspace{.2em}\vbar\hspace{-.31em}{\rm Q} \else \mbox{$\hspace{.2em}\vbar\hspace{-.31em}{\rm Q}$} \fi}
\newcommand{\Z}{\ifmmode {\rm Z}\hspace{-.28em}{\rm Z} \else ${\rm Z}\hspace{-.28em}{\rm Z}$ \fi}
\newcommand{\var}{\mbox {var}}
\newcommand{\bgamma}{\mbox {\boldmath $\gamma$}}
\newcommand{\CN}{\mathcal{CN}}
\DeclarePairedDelimiter{\ceil}{\lceil}{\rceil}
\newtheorem{proof}{Proof}
\begin{document}

\title{Optimized Power Control for Massive MIMO with Underlaid D2D Communications}

\author{Amin Ghazanfari, Emil Bj{\"o}rnson, and Erik G. Larsson\\
	Department of Electrical Engineering (ISY), Link{\"o}ping University, Sweden\\
	Email:\{amin.ghazanfari, emil.bjornson, erik.g.larsson\}@liu.se
	\thanks{A preliminary version of this work was presented at the 22nd international ITG workshop on smart antennas (WSA 2018) \cite{amin2018wsa}. This paper was supported by the European Union's Horizon 2020 research
		and innovation programme under grant agreement No 641985 (5Gwireless).}}

\maketitle
\begin{abstract}
In this paper, we consider device-to-device (D2D) communication that is underlaid in a multi-cell massive multiple-input multiple-output (MIMO) system and propose a new framework for power control and pilot allocation. In this scheme, the cellular users (CUs) in each cell get orthogonal pilots which are reused with reuse factor one across cells, while all the D2D pairs share another set of orthogonal pilots. We derive a closed-form capacity lower bound for the CUs with different receive processing schemes. In addition, we derive a capacity lower bound for the D2D receivers and a closed-form approximation of it. We provide power control algorithms to maximize the minimum spectral efficiency (SE) and maximize the product of the signal-to-interference-plus-noise ratios in the network. Different from prior works, in our proposed power control schemes, we consider joint pilot and data transmission optimization. Finally, we provide a numerical evaluation where we compare our proposed power control schemes with the maximum transmit power case and the case of conventional multi-cell massive MIMO without D2D communication. Based on the provided results, we conclude that our proposed scheme increases the sum SE of multi-cell massive MIMO networks.
\end{abstract}

\begin{IEEEkeywords}
	MIMO systems, Power control, Optimization methods, Interference
	suppression.
\end{IEEEkeywords}
\section{Introduction}
\label{introduction}
Device-to-device (D2D) underlay communication and massive multiple-input multiple-output (MIMO) are two new promising technologies in wireless communication that will appear in 5G networks \cite{WWB5G}. Although these two technologies show significant improvements in the energy and spectral efficiency of the network, their combination can even enhance the network performance further. D2D underlay communication enhances the spectrum utilization by reusing the cellular resources for direct communication between D2D pairs when the transmitter and receiver are closely located. It provides benefits such as cellular traffic offloading for cellular networks. This gain is due to the direct communication between the transmitter and receiver of D2D pairs instead of sending the data through cellular base stations (BS); higher data rate and lower transmission power between D2D users due to the short-range communication \cite{asadi2014survey,doppler2009device,lin2014overview}. Hence, D2D underlay communication increases the spectral and energy efficiency of the cellular networks. These benefits come at the cost of causing extra interference to the cellular users (CUs) \cite{janis2009interference,AminICC,fodor2012design}.

Massive MIMO is one of the most significant technologies in 5G, as it offers large improvements in spectral and energy efficiency of cellular networks. Because it utilizes a large number of antennas at each BS, it offers multiplexing gains and spatial interference suppression for CUs \cite{redbook,bjornson2017massive,larsson2014massive}. By mitigating the extra interference that the D2D communication causes to the cellular network, we can potentially enhance the mentioned benefits of D2D underlay communication to the cellular network. Massive MIMO seems to address the shortcoming of D2D underlay communication. Hence, the combination of these two technologies has received considerable attention. The idea of combining D2D underlay and massive MIMO technologies has been investigated in both single-cell and multi-cell setups with various objectives, such as data power optimization, proposing new pilot allocation scheme, etc. Although, downlink (DL) D2D underlay massive MIMO has been considered in the literature \cite{shalmashi2016energy,amin2015power}, most of these works focus on the uplink (UL) D2D underlay setup. This is due to the fact that interference suppression is easier to implement in the UL at the massive MIMO BS.

Most prior works in D2D underlaying UL of massive MIMO systems has been investigated mainly for single-cell setups \cite{liu2017pilot, xu2016power, xu2017pilot,hau2017pilot,Xu2017}. In these papers, the transmitter and receiver of a D2D pair are always physically located in the same cell. This simplifies the resource allocation since each D2D pair shares resources with that specific cell. These papers consider problems such as pilot reuse schemes for D2D underlay \cite{liu2017pilot,xu2017pilot,hau2017pilot,Xu2017,xu2016power}, data power control for D2D users only \cite{hau2017pilot,xu2017pilot,xu2016power}, and data power control for both D2D and cellular users \cite{Xu2017}. The multi-cell scenario has been studied in a few papers such as \cite{lin2015interplay, He2017SE}.


Lin \textit{et al.} \cite{lin2015interplay} investigate the interplay between massive MIMO and underlaid D2D for UL data transmission in a multi-cell massive MIMO setup where the number of D2D pairs in each cell follows a Poisson distribution. A detailed analysis is provided for the case of perfect channel state information (CSI), in which they study the asymptotic and non-asymptotic spectral efficiency (SE) of CUs and D2D pairs. They also perform channel estimation based on an orthogonal training scheme that allocates orthogonal pilots to CUs and a limited number of D2D pairs nearest to the location of the BS in their cell. In this case, they study the asymptotic and non-asymptotic SE of CUs only. It is assumed that all users transmit with a predefined transmit power. In \cite{He2017SE}, the authors consider a multi-cell massive MIMO UL D2D underlay model and consider the spatial location of BSs and D2D transmitters follow two independent homogeneous Poisson point process. They derive exact expressions of the SEs for cellular and D2D communication which are not in closed form. Then they apply open-loop power control for interference cancellation in the network. This paper does not consider channel estimation and pilot transmission and they only apply a simple open-loop power control scheme.

\subsection{Contributions of the paper}
We consider a multi-cell massive MIMO setup, in contrast to most prior works, with D2D underlaying in the UL. It is assumed that the CUs in a cell have orthogonal pilots that are reused with reuse factor one between cells. In addition, the system contains multiple D2D pairs that are arbitrarily located, do not belong to any cell, and share a network-wide set of orthogonal pilots. Power control is mandatory to achieve good performance in such networks, but this has not been considered in prior works. The existing SE expressions for D2D communication depend on the small-scale fading coefficients and are unsuitable for power control since it should exploit the time-frequency diversity against fading instead of counteracting the instantaneous fading realizations. To overcome these  issues, we make the following main contributions:
\begin{itemize}
	\item  We derive a new lower bound on the ergodic capacity of the D2D receivers under imperfect CSI. The bound contains an expectation with respect to the small-scale fading.
	To enable the derivation of tractable power control schemes, we derive a closed-form approximation of the capacity bound and interpret its structure.
	\item We derive closed-form capacity lower bounds for the CUs with either maximum-ratio (MR) and zero-forcing (ZF) processing, which support arbitrary pilot allocation among the D2Ds.
	\item We formulate several data power control problems using our new SE expressions (lower bounds on the capacity) and solve them using convex optimization. We jointly optimize the data power of CUs and D2D pairs to guarantee max-min fairness to the cellular and D2D communications. We also maximize the product of the signal-to-interference-plus-noise ratios (SINRs) of the cellular and D2D users, which gives priority to users with good channels.
\end{itemize}
For the first time in this area, we also consider the joint optimization of the data transmission power and pilot transmission power:
\begin{itemize}	
	\item We formulate novel joint data and pilot power control problems with MR processing at the BSs and solve both max-min fairness and max product SINR problems using geometric programming.
	\item We propose a successive approximation algorithm to solve the corresponding joint data and pilot power control problem for the case of ZF processing at the BSs.
\end{itemize}

We provide a numerical performance evaluation of our proposed algorithms. The conference version of this work is found in \cite{amin2018wsa}, where max-min fairness power control is considered with variable data transmission power only. This paper contains more general and complete results.
\begin{figure}[h!]
	\centering
	\centerline{\includegraphics[width=7cm]{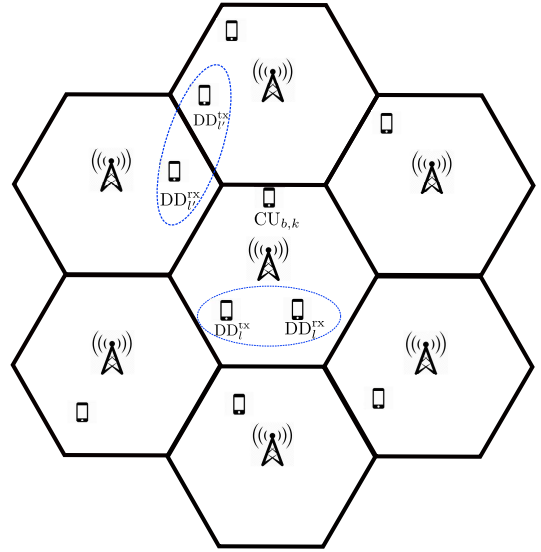}}
	\caption{The considered setup with a cellular network Massive MIMO system that is underlaid by D2D communications.}
	\label{fig:model}
\end{figure}
\section{System Model}
\label{SystemModel}
In this paper, we consider a multi-cell massive MIMO system consisting of $B$ BSs, each equipped with an array of $M$ antennas and each BS serves $K$ single-antenna CUs. The system also contains $L$ D2D pairs that are spread over the whole network and do not belong to any specific cell.\footnote{The discovery of D2D pairs is typically done at higher layers than the physical/MAC layer and is thus not considered in this paper. However, when a new D2D pair is formed or a previous pair leaves the D2D mode, the system can be optimized again using the methods described in this paper.} The network model is illustrated in Fig.~\ref{fig:model} with hexagonal cells. Note that this figure is only for illustrative purpose, while our analysis applies to cellular networks with arbitrary shapes of the cells. In this setup, we investigate D2D communication that is underlaid the UL data transmission of the multi-cell massive MIMO system. The wireless channels are varying over time and frequency, which we model by conventional block fading. We define the coherence interval of a channel as the time-frequency block in which the channel is constant and its size is $\tau_c$ samples (or channel uses) \cite[Ch.~2]{redbook},\cite[Ch.~2]{bjornson2017massive}. The channels change independently from one block to another according to a stationary ergodic random process. The number of samples per coherence interval is given by $\tau_c = T_{c} B_{c}$, according to the Nyquist-Shannon sampling theorem, where $T_{c}$ is the coherence time and $B_{c}$ is the coherence bandwidth.
In the proposed setup, ${\mathbf h}^{b,\rm c}_{b',k}\sim \CN({\mathbf{0}},\beta^{b,\rm c}_{b',k} {\mathbf I}_{M})$ is the channel response between the BS $b$ and CU $k$ in cell $b'$, where $\beta^{b,\rm c}_{b',k}$ is the corresponding large-scale fading and ${\mathbf h}^{b,\rm d}_{l} \sim \CN({\mathbf{0}},\beta^{b,\rm d}_l {\mathbf I}_M)$ is the channel response between D2D transmitter~$l$~and BS $b$ and $\beta^{b,\rm d}_l$ indicates the large-scale fading between D2D transmitter~$l$~and BS $b$. In addition, ${g}^{l,\rm c}_{b,k} \sim \CN(0,\beta^{l,\rm c}_{b,k}) $ denotes the channel response between CU~$k$~located in cell $b$ and D2D receiver~$l$, while ${g}^{l,\rm d}_{l'} \sim \CN(0,\beta^{l,\rm d}_{l'})$ denotes the channel between D2D transmitter~$l'$~and D2D receiver~$l$. In addition, $\beta^{l,\rm c}_{b,k}$ and $\beta^{l,\rm d}_{l'}$ refer to the large-scale fading between CU~$k$~located in cell $b$ and D2D receiver~$l$ and between D2D transmitter~$l'$~and D2D receiver~$l$, respectively.

In our network setup, it is assumed that the BSs and D2D receivers have no prior CSI at the beginning of a coherence interval; hence, channel estimation is carried out in each coherence interval. Therefore, the communication consists of two phases: UL pilot transmission and UL data transmission. To enable channel estimation, the CUs and D2D transmitters transmit pilot sequences of length $\tau$ in the pilot transmission phase and the remaining $\tau_{c} - \tau$ samples are utilized in the data transmission phase. In this setup, we can construct $\tau$ mutually orthogonal pilot sequences that are vectors of length $\tau$. It is assumed that we have $K$ orthogonal pilots for CUs that are reused in each cell and we have another set of $N$ orthogonal pilots designated for the $L\geq N$ D2D pairs. Hence, we have $\tau = N+K$. 

\subsection{Uplink data transmission}
The received signal during data transmission at BS $b$ is
\begin{equation}
\begin{aligned}\label{eq:data BS-cu}
{\mathbf y}^{\rm c}_{b}&= \sum\limits_{k=1}^{K}\sqrt{p^{\rm c}_{b,k}} {\mathbf h}^{b,\rm c}_{b,k} s^{\rm c}_{b,k}+\sum\limits_{\substack{b'=1,\\b'\neq b}}^{B}\sum\limits_{k=1}^{K}\sqrt{p^{\rm c}_{b',k}} {\mathbf h}^{b,\rm c}_{b',k} s^{\rm c}_{b',k}\\
&+\sum\limits_{l=1}^{L}\sqrt{{p^{\rm d}_{l}}}{\mathbf h}^{b,\rm d}_{l} s^{\rm d}_{l}+{\mathbf w}_{b}\\
\end{aligned}
\end{equation}
and the received signal at D2D receiver $l$ is 
\begin{equation}
\begin{aligned}\label{eq:data BS-dd}
{y}^{\rm d}_l&= \sum\limits_{b'=1}^{B}\sum\limits_{k=1}^{K}\sqrt{p^{\rm c}_{k}} {g}^{l,\rm c}_{b,k}s^{\rm c}_{b,k} +\sum\limits_{l'=1}^{L}\sqrt{{p^{\rm d}_{l'}}}{g}^{l,\rm d}_{l'}s^{\rm d}_{l'}+{w}_l,\\
\end{aligned}
\end{equation}
where~$p^{\rm c}_{b,k}$~and~$p^{\rm d}_{l'}$~are the transmit powers for data transmission of the CU~$k$ in cell~$b$~and the D2D transmitter~$l'$, respectively. ${\mathbf w}_{b}\sim\CN\left({\mathbf{0}},{\mathbf{I}}_M\right)$~and~${w}_l\sim\CN\left(0,1\right)$~indicate the normalized additive white Gaussian noise at BS $b$ and the $l$th D2D receiver, respectively. Also,~$s^{\rm c}_{b,k}$~and~$s^{\rm d}_{l'}$~denote zero mean and unit variance data symbols transmitted from CU~$k$~in cell~$b$~and D2D transmitter~$l'$, respectively.

\section{Analysis of spectral efficiency}
\label{Analysis of spectral efficiency}
In this section, we derive the SEs achieved when using the communication setup defined in Section~\ref{SystemModel}. Pilot contamination arises between CUs due to the pilot reuse between cells and pilot contamination also arises between the D2D pairs that are using the same pilot.
\subsection{Pilot transmission and channel estimation}
We denote the matrix of pilot sequences as $\mathbf{\boldsymbol{\Phi}} = [\mathbf{\boldsymbol{\Phi}}^{\rm c}~\mathbf{\boldsymbol{\Phi}}^{\rm d}]$ and it has size $\tau \times (K+N)$. The matrices $\mathbf{\boldsymbol{\Phi}}^{\rm c} = [\mathbf{\boldsymbol{\phi}}^{\rm c}_{1},\ldots,\mathbf{\boldsymbol{\phi}}^{\rm c}_{K}] \in \mathbb{C}^{\tau \times K}$ and $\mathbf{\boldsymbol{\Phi}}^{\rm d} = [\mathbf{\boldsymbol{\phi}}^{\rm d}_{1},\ldots,\mathbf{\boldsymbol{\phi}}^{\rm d}_{N}] \in \mathbb{C}^{\tau \times N}$ contain the orthogonal unit-norm pilot sequences assigned for CUs and D2D pairs, respectively. The pilot $\mathbf{\boldsymbol{\phi}}^{\rm c}_{k}$ is used by CU~$k$~in each of the cells.
The sets~$\mathcal{N}_1,\ldots,\mathcal{N}_N$~contain the indices of D2D pairs that are using pilots~$\mathbf{\boldsymbol{\phi}}^{\rm d}_1,\ldots,\mathbf{\boldsymbol{\phi}}^{\rm d}_N$, respectively.
The received pilot signal~${\mathbf Y}^{b} \in \mathbb{C}^{M \times \tau}$~at BS~$b$~from all CUs and D2D transmitters is~\cite[Ch.~3]{redbook},\cite[Ch.~3]{bjornson2017massive}
\begin{equation}
\begin{aligned}\label{eq:pilot BS-DD1}
{\mathbf Y}^{b}&= \sum\limits_{k=1}^{K}\sum_{b'=1}^{B}\sqrt{\tau {{p^{\rm p,\rm c}_{b',k}}}}{\mathbf h}^{b,\rm c}_{b',k} \left(\boldsymbol{\phi}^{{\rm c}}_k\right)^{\rm H} \\
&+ \sum\limits_{i=1}^{N}\sum_{l \in \mathcal{N}_i}\sqrt{\tau {p^{\rm p,\rm d}_{l}}}{\mathbf h}^{b,\rm d}_{l} \left(\boldsymbol{\phi}^{{\rm d}}_i\right)^{\rm H}+{\mathbf W}^{b}, \\
\end{aligned}
\end{equation}
where~$p^{\rm p,c}_{b,k}$~denotes the pilot transmit power used by CU~$k$~in cell~$b$~and~$p^{\rm p,d}_{l}$~is the transmit power of D2D transmitter~$l$~for pilot transmission. ${\mathbf W}^{b} \in \mathbb{C}^{M\times \tau}$~is the normalized additive white Gaussian noise at the BS~$b$~which has independent entries having the distribution~$\CN\left(0,1\right)$. Each BS multiplies the received signal matrix in the pilot transmission phase with each of the pilot signals to despread the signals. Hence, the received pilot signals at BS~$b$~from its CUs after despreading by $\mathbf{\boldsymbol{\Phi}}^{\rm c}$ is~\cite[Ch.~3]{redbook},\cite[Ch.~3]{bjornson2017massive}
\begin{equation}
\begin{aligned}\label{eq:pilot BS-cu}
{\mathbf Y}^{b} \boldsymbol{\Phi}^{\rm c} &= \sum\limits_{k=1}^{K}\sum_{b'=1}^{B}\sqrt{\tau {p^{\rm p,\rm c}_{b',k}}}{\mathbf h}^{b,\rm c}_{b',k} \left(\boldsymbol{\phi}^{{\rm c}}_k\right)^{\rm H} \boldsymbol{\Phi}^{\rm c} +{\mathbf W''}^{b,\rm c},\\
\end{aligned}
\end{equation}
where~${{\mathbf W}''}^{b,\rm c} = {\mathbf W}^{b }\boldsymbol{\Phi}^{\rm c} \in \mathbb{C}^{M\times K}$ is the normalized additive white Gaussian noise at the BS~$b$~with independent entries having the distribution~$\CN\left(0,1\right)$. The received pilot signal at BS $b$ from all D2D transmitters is the second part of \eqref{eq:pilot BS-DD1} and despreading by multiplication with $\mathbf{\boldsymbol{\Phi}}^{\rm d}$ yields
\begin{equation}
\begin{aligned}\label{eq:pilot BS-DD}
{\mathbf Y}^{b} {\pmb\Phi}^{\rm d}&= \sum\limits_{i=1}^{N}\sum_{l\in \mathcal{N}_i}\sqrt{\tau {p^{\rm p,\rm d}_{l}}}{\mathbf h}^{b,\rm d}_{l}(\pmb{\phi}^{{\rm d}}_i)^{\rm H}\pmb{\Phi}^{{\rm d}}+{\mathbf W'}^{b,\rm d},\\
\end{aligned}
\end{equation}
where~${\mathbf W'}^{b,\rm d} = {\mathbf W}^{b,\rm d}\boldsymbol{\Phi}^{\rm d}\in \mathbb{C}^{M\times N}$~is the normalized additive white Gaussian noise at the BS~$b$~which has independent entries having the distribution~$\CN\left(0,1\right)$.  
The minimum mean-squared error (MMSE) estimates of the channel vectors from CU~$k$ at BS $b$ are~\cite{kailath2000linear,kay1993fundamentals}
\begin{equation}
\begin{aligned}\label{eq:mmse BS-cu}
\hat{{\mathbf h}}^{b,\rm c}_{b,k} =  \frac{\sqrt{\tau p^{\rm p,c}_{b,k}} \beta^{b,\rm c}_{b,k}}{1+ \tau \sum\limits_{b'=1}^{B} p^{\rm p,c}_{b',k} \beta^{b,\rm c}_{b',k}} {{\mathbf Y}^{b} \mathbf{\boldsymbol{\phi}}^{\rm c}_{k}}.\\
\end{aligned}
\end{equation}
The MMSE estimate at BS~$b$~of the sum of the channel vectors~$\sum_{l\in \mathcal{N}_i} \sqrt{\smash[b]{\tau p^{\rm p, d}_{l}}} {\mathbf h}^{b,\rm d}_{l}$~from D2D transmitters in set~$\mathcal{N}_i$~from D2D transmitters in set~$\mathcal{N}_i$~is 
\begin{equation}
\begin{aligned} \label{eq:mmse BS-dd-ni}
\hat{{\mathbf h}}^{b,\mathcal{N}_i}_{} = \frac{\sum\limits_{l' \in \mathcal{N}_i} {\tau p^{\rm p, d}_{l'}} \beta^{b,\rm d}_{l'}}{1+ \sum\limits_{l'\in \mathcal{N}_i}\tau {p^{\rm p,\rm d}_{l'}} \beta^{b,\rm d}_{l'}} {{\mathbf Y}^{b}} \mathbf{\boldsymbol{\phi}}^{d}_{i}.\\
\end{aligned}
\end{equation}
In addition, the MMSE estimates of the channel vectors from  D2D transmitter~$l \in \mathcal{N}_i$~at BS~$b$~are
\begin{equation}
\begin{aligned}\label{eq:mmse BS-dd}
\hat{{\mathbf h}}^{b,\rm d}_{l} = \frac{\sqrt{\tau {p^{\rm p, d}_{l}}} \beta^{b,\rm d}_{l}}{1+ \sum\limits_{l'\in \mathcal{N}_i}\tau {p^{\rm p,\rm d}_{l'}} \beta^{b,\rm d}_{l'}}  {\mathbf Y}^{b} \mathbf{\boldsymbol{\phi}}^{\rm d}_i.\\
\end{aligned}
\end{equation}
The mean square of the channel estimates at the BS are~$\mathbb{E}[\|\hat{{\mathbf h}}^{b,\rm c}_{b,k}\|^2]= \gamma^{b,\rm c}_{b,k} M$,~$\mathbb{E}[\|	\hat{{\mathbf h}}^{b,\mathcal{N}_i}\|^2]= \gamma^{b,\mathcal{N}_i} M$, and~$\mathbb{E}[\|\hat{{\mathbf h}}^{b,\rm d}_{l}\|^2]=\gamma^{b,\rm d}_{l} M$, 
in which we have used
\begin{equation}
\begin{aligned}
\gamma^{b,\rm c}_{b,k} = \frac{\tau p^{\rm p,c}_{b,k} \left({\beta^{b,\rm c}_{b,k}}\right)^2}{1+\tau \sum\limits_{b'=1}^{B} p^{\rm p,c}_{b',k} \beta^{b,\rm c}_{b',k}},
\end{aligned}
\end{equation}
\begin{equation}
\begin{aligned}
\gamma^{b,\mathcal{N}_i} = \frac{\tau \left(\sum\limits_{l' \in \mathcal{N}_i} \sqrt{{p^{\rm p,\rm d}_{l'}}} {\beta^{b,\rm d}_{l'}}\right)^2}{1+\sum\limits_{l'\in \mathcal{N}_i}\tau p^{\rm p, d}_{l'} \beta^{b,\rm d}_{l'}},
\end{aligned}
\end{equation}
\begin{equation}
\begin{aligned}
\gamma^{b,\rm d}_{l} = \frac{\tau {p^{\rm p,\rm d}_{l}} \left({\beta^{b,\rm d}_{l}}\right)^2}{1+\sum\limits_{l'\in \mathcal{N}_i}\tau p^{\rm p, d}_{l'} \beta^{b,\rm d}_{l'}}.
\end{aligned}
\end{equation}
Note that the channel estimate~$\hat{{\mathbf h}}^{b,\rm d}_{l}$~of  D2D transmitter~$l \in \mathcal{N}_i$~is a scaled version of~$\hat{{\mathbf h}}^{b,\mathcal{N}_i}$, which is a property that we will utilize later in the receive processing. The D2D receivers apply the same procedure for channel estimation. Therefore, the received signals at D2D receiver~$l$~from the pilot transmission of CU~$k$ located in cell~$b$~and D2D transmitter~$l' \in \mathcal{N}_i$~after despreading the signals are
\begin{equation}
\begin{aligned}\label{eq:pilot DD-cu}
{y}^{l,\rm c}_{b,k}&=  \sum\limits_{b'=1}^{B}\sqrt{\tau p^{l,\rm p,c}_{b',k}} {g}^{l,\rm c}_{b',k}+{w}^{\rm c}_{l}, \\
\end{aligned}
\end{equation}
\begin{equation}
\begin{aligned}\label{eq:pilot dd}
{y}^{l,\rm d}_{l'}&= \sum\limits_{l''\in n_{i}}\sqrt{\tau {p^{\rm p,d}_{ l''}}}{g}^{l,\rm d}_{l''} +{ w}^{\rm d}_l, \\
\end{aligned}
\end{equation}
where~${w}^{\rm d}_l\sim\CN\left(0,1\right)$~and~${ w}^{\rm c}_l \sim\CN\left(0,1\right)$~are the normalized additive white Gaussian noise terms at D2D receiver~$l$.  
The pilot transmission of all transmitters, i.e., cellular and D2D, are used for channel estimation and the MMSE estimates of the channels from CU~$k$~located in cell $b$ and from D2D transmitter~$l' \in \mathcal{N}_i$~are respectively given by
\begin{equation}
\begin{aligned}\label{eq:mmse dd-cu}
\hat{g}^{l,\rm c}_{b,k} = \frac{\sqrt{\tau p^{l,\rm p,c}_{b,k}} \beta^{l,\rm c}_{b,k}}{1+ \sum\limits_{b'=1}^{B}\tau p^{l,\rm p,c}_{b',k} \beta^{l,\rm c}_{b',k}} {y}^{l,\rm c}_{b,k}, \\
\end{aligned}
\end{equation}
\begin{equation}
\begin{aligned}\label{eq:mmse dd}
\hat{g}^{l,\rm d}_{l'} = \frac{\sqrt{\tau p^{\rm p,d}_{l'}} \beta^{l,\rm d}_{l'}}{1+ \sum\limits_{l''\in \mathcal{N}_i}\tau p^{\rm p,d}_{l''} \beta^{l,\rm d}_{l''}} {y}^{l,\rm d}_{l'}.\\
\end{aligned}
\end{equation}
In addition, the mean square of the channel estimates at D2D receiver~$l$~are
\begin{equation}
\begin{aligned}
\mathbb{E}\left[\left|\hat{g}^{l,\rm c}_{b,k}\right|^2\right] = {\gamma^{l,\rm c}_{b,k}}=  \frac{\tau p^{l,\rm p,c}_{b,k} \left({\beta^{l,\rm c}_{b,k}}\right)^2}{1+ \sum\limits_{b'=1}^{B}\tau p^{l,\rm p,c}_{b',k} \beta^{l,\rm c}_{b',k}},
\end{aligned}
\end{equation}
\begin{equation}
\begin{aligned}
\mathbb{E}\left[\left|\hat{g}{^{l,\rm d}_{l'}}\right|^2\right] = \gamma^{l,\rm d}_{l'}=  \frac{\tau p^{\rm p,d}_{l'} \left({\beta^{l,\rm d}_{l'}}\right)^2}{1+ \sum\limits_{l''\in \mathcal{N}_i}\tau p^{\rm p,d}_{l''} \beta^{l,\rm d}_{l''}}.
\end{aligned}
\end{equation}
	
\begin{figure*}
	\begin{equation}\tag{18}
	\begin{aligned}\label{eq:MRC BS1}
	\left(\hat{\mathbf h}^{b,\rm c}_{b,k}\right)^{\rm H}{\mathbf y}^{\rm c}_{b}&= \sqrt{p^{\rm c}_{b,k}}\left(\hat{\mathbf h}^{b,\rm c}_{b,k}\right)^{\rm H}{\mathbf h}{^{b,\rm c}_{b,k}} s^{\rm c}_{b,k}+ \sum\limits_{\substack{k'=1,\\ k' \neq k}}^{K}\sqrt{p^{\rm c}_{b,k'}}\left(\hat{\mathbf h}^{b,\rm c}_{b,k}\right)^{\rm H} {\mathbf h}{^{b,\rm c}_{b,k'}}  s^{\rm c}_{b,k'}+\\
	&\sum\limits_{\substack{b'=1,\\b'\neq b}}^{B}\sum\limits_{k'=1}^{K} \sqrt{p^{\rm c}_{b',k}} \left(\hat{\mathbf h}^{b,\rm c}_{b,k}\right)^{\rm H} {\mathbf h}^{b,\rm c}_{b',k'} s^{\rm c}_{b',k'}+ \sum\limits_{l=1}^{L}\sqrt{p^{\rm d}_{l}}\left(\hat{\mathbf h}^{b,\rm c}_{b,k}\right)^{\rm H}{\mathbf h}^{b,\rm d}_{l} s^{\rm d}_{l}+\left(\hat{\mathbf h}^{b,\rm c}_{b,k}\right)^{\rm H}{\mathbf w}_{b}\\
	\end{aligned}
	\end{equation}
	\hrulefill
	\vspace{-3mm}
\end{figure*}
\subsection{Spectral efficiency with MR processing}
In this subsection, we assume that MR processing is applied at all BSs to detect the signals of their own users. The received data signal of user~$k$~at the cell~$b$~is expressed as \eqref{eq:MRC BS1}, as shown at the top of the next page, where the first term is the desired part of the received signal, the second term is cellular intracell interference, the third term is intercell interference, the fourth term is D2D interference, and the last term is noise. We rewrite the received data signal of the CU~$k$~at the cell~$b$~as
\setcounter{equation}{18}
\begin{equation}
\begin{aligned}\label{eq:MRC BS}
\left(\hat{\mathbf h}^{b,\rm c}_{b,k}\right)^{\rm H}{\mathbf y}^{\rm c}_{b}&= \sqrt{p^{\rm c}_{b,k}} \mathbb{E}\left[\left(\hat{\mathbf h}^{b,\rm c}_{b,k}\right)^{\rm H}{\mathbf h}^{b,\rm c}_{b,k}\right] s^{\rm c}_{b,k} \\
&+\sqrt{p^{\rm c}_{b,k}}\left(\left(\hat{\mathbf h}^{b,\rm c}_{b,k}\right)^{\rm H}{\mathbf h}^{b,\rm c}_{b,k} - \mathbb{E}\left[\left(\hat{\mathbf h}^{b,\rm c}_{b,k}\right)^{\rm H}{\mathbf h}^{b,\rm c}_{b,k}\right]\right) s^{\rm c}_{b,k}\\
&+ \sum\limits_{\substack{k'=1,\\ k' \neq k}}^{K}\sqrt{p^{\rm c}_{b,k'}}\left(\hat{\mathbf h}^{b,\rm c}_{b,k}\right)^{\rm H} {\mathbf h}{^{b,\rm c}_{b,k'}}  s^{\rm c}_{b,k'}+\\&\sum\limits_{\substack{b'=1,\\b'\neq b}}^{B}\sum\limits_{k'=1}^{K} \sqrt{p^{\rm c}_{b',k}} \left(\hat{\mathbf h}^{b,\rm c}_{b,k}\right)^{\rm H} {\mathbf h}^{b,\rm c}_{b',k'} s^{\rm c}_{b',k'}\\
&+ \sum\limits_{l=1}^{L}\sqrt{p^{\rm d}_{l}}\left(\hat{\mathbf h}^{b,\rm c}_{b,k}\right)^{\rm H}{\mathbf h}^{b,\rm d}_{l} s^{\rm d}_{l}+\left(\hat{\mathbf h}^{b,\rm c}_{b,k}\right)^{\rm H}{\mathbf w}_{b},\\
\end{aligned}
\end{equation}
where the first term is treated as the desired part of the received signal and the rest of the terms are treated as noise in the signal detection. More precisely, we utilize the use-and-then-forget technique~\cite[Ch.~3]{redbook}~to lower bound the capacity of each of the CUs, using the capacity bound for a deterministic channel with additive white non-Gaussian noise provided in~\cite[Sec.~2.3]{redbook}. We get the lower bound on the capacity of CU~$k$~in cell~$b$ as 
	\begin{equation}
	\begin{aligned}\label{eq:MRC-bound-pilot}
	R^{b,\rm c}_{b,k} =&  
	\left(1 - \frac{\tau }{\tau_c}\right)\log_2 \left(1+\frac{M p^{\rm c}_{b,k} \left(\frac{\tau p^{\rm p,c}_{b,k} \left({\beta^{b,\rm c}_{b,k}}\right)^2}{1+\tau \sum\limits_{b'=1}^{B} p^{\rm p,c}_{b',k} \beta^{b,\rm c}_{b',k}}\right)}{ I^{b,\rm c}_{b,k}\left({\rm MR}\right) }\right),\\
	\end{aligned}
	\end{equation}
where $I^{b,\rm c}_{b,k}\left({\rm MR}\right)$ is defined as seen in \eqref{eq:MRC-bound-pilot_I(MR)} at the top of the next page.
\begin{figure*}
	\begin{equation}\label{eq:MRC-bound-pilot_I(MR)}
	\begin{aligned}
	I^{b,\rm c}_{b,k}\left({\rm MR}\right)=& 1+ \sum\limits_{b'=1}^{B}\sum\limits_{k'=1}^{K} p^{\rm c}_{b',k'} \beta^{b,\rm c}_{b',k'} + \sum\limits_{l=1}^{L}p^{\rm d}_{l}  \beta^{b,\rm d}_l + M\sum\limits_{b'=1,b'\neq b}^{B} p^{\rm c}_{b',k} \left(\frac{\tau p^{\rm p,c}_{b',k} \left({\beta^{b,\rm c}_{b',k}}\right)^2}{1+\tau \sum\limits_{b''=1}^{B} p^{\rm p,c}_{b'',k} \beta^{b,\rm c}_{b'',k}}\right),\\
	\end{aligned}
	\end{equation}
	\hrulefill
\end{figure*}

We can write \eqref{eq:MRC-bound-pilot} as
	\begin{equation}
	\begin{aligned}\label{eq:MRC-pilot power}
	R^{b,\rm c}_{b,k}= &  
	\left(1 - \frac{\tau }{\tau_c}\right)\log_2 \left(1+\underset{{\rm SINR}^{b, \rm c}_{b,k}}{\underbrace{\frac{M \tau p^{\rm c}_{b,k} p^{\rm p,c}_{b,k} \left({\beta^{b,\rm c}_{b,k}}\right)^2 }{{I'}^{b,\rm c}_{b,k}\left({\rm MR}\right) }}}\right)\\
	\end{aligned}
	\end{equation}
where ${I'}^{b,\rm c}_{b,k}\left({\rm MR}\right)$ is defined in \eqref{eq:MRC-bound-pilot_I(MR)_prime}, shown at the top of the next page.
\begin{figure*}
	\begin{equation}\label{eq:MRC-bound-pilot_I(MR)_prime}
	\begin{aligned}
	{I'}^{b,\rm c}_{b,k}\left({\rm MR}\right)=& \left({1+\tau \sum\limits_{b'=1}^{B} p^{\rm p,c}_{b',k} \beta^{b,\rm c}_{b',k}}\right) \left( 1+ \sum\limits_{b'=1}^{B}\sum\limits_{k'=1}^{K} p^{\rm c}_{b',k'} \beta^{b,\rm c}_{b',k'} + \sum\limits_{l=1}^{L}p^{\rm d}_{l}  \beta^{b,\rm d}_l \right)+M\tau \sum\limits_{b'=1,b'\neq b}^{B} p^{\rm c}_{b',k} p^{\rm p,c}_{b',k} \left({\beta^{b,\rm c}_{b',k}}\right)^2.
	\end{aligned}
	\end{equation}
	\hrulefill\
\end{figure*}
\subsection{Spectral efficiency with zero-forcing processing}
In this subsection, we consider ZF processing at all BSs, instead of MR processing, to suppress interference between the users. In addition to suppressing intracell interference, we also suppress interference from the D2D transmitters. For ease of notation, we rewrite~\eqref{eq:data BS-cu}~in matrix form as
\begin{equation}
\begin{aligned}\label{eq:data BS-cu-matrix}
	{\mathbf y}^{\rm c}_{b}&= {\mathbf{H}}^{b,\rm c}_{b} {\mathbf{D}}^{1/2}_{{\mathbf{p}}^{\rm c}_{b}}{\mathbf{s}}^{\rm c}_{b}+\sum\limits_{\substack{b'=1,\\b'\neq b}}^{B} {\mathbf{H}}^{b,\rm c}_{b'} {\mathbf{D}}^{1/2}_{{\mathbf{p}}^{\rm c}_{b'}}{\mathbf{s}}^{\rm c}_{b'} + {\mathbf{H}}^{b,\rm d} {\mathbf{D}}^{1/2}_{\mathbf{p}^{\rm d}} {\mathbf{s}}^{b,\rm d} +{\mathbf{w}}_{b}
\end{aligned}
\end{equation}
where~${\mathbf{H}}^{b,\rm c}_{b'} = [{\mathbf h}^{b,\rm c}_{b',1},\ldots,{\mathbf h}^{b',\rm c}_{b,K}]$~is the channel matrix of the~$K$~CUs located in cell~$b'$~to BS~$b$~and~${\mathbf{H}}^{b',\rm d} =[{\mathbf h}^{b',\rm d}_{1},\ldots,{\mathbf h}^{b',\rm d}_{L}]$~is the channel matrix of the D2D transmitters to BS~$b'$. ${\mathbf{D}}_{{\mathbf{p}}^{\rm c}_{b'}}$~and~${\mathbf{D}}_{{\mathbf p}^{\rm d}}$~are diagonal matrices indicating the transmit power of CUs at cell~$b'$~and D2D transmitters, respectively, where ${\mathbf{p}}^{\rm c}_{b'} = [p^{\rm c}_{b'1},\ldots,p^{\rm c}_{b'K}]^{T}$ and ${\mathbf{p}}^{\rm d} = [ p^{\rm d}_{1},\ldots,p^{\rm d}_{L}]^{T}$~are the vectors on the diagonals. The channel matrices in~\eqref{eq:data BS-cu-matrix}~can be replaced with the corresponding estimated channel matrices $\hat{\mathbf{H}}^{b,\rm c}_{b'} = [\hat{{\mathbf h}}^{b,\rm c}_{b'1},\ldots,\hat{{\mathbf h}}^{b,\rm c}_{b',K}]$~and~$\hat{\mathbf{H}}^{b,\rm d} = [\hat{{\mathbf h}}^{\rm d}_1,\ldots,\hat{{\mathbf h}}^{\rm d}_{L}]$~and estimation error matrices defined as~$\tilde{\mathbf{H}}^{b,\rm c}_{b'}= \hat{\mathbf{H}}^{b,\rm c}_{b'} - {\mathbf{H}}^{b,\rm c}_{b'} = [\tilde{{\mathbf h}}^{b,\rm c}_{b'1},\ldots,\tilde{{\mathbf h}}^{b,\rm c}_{b',K}]$~and~$\tilde{\mathbf{H}}^{b,\rm d} = \hat{\mathbf{H}}^{b,\rm d} - {\mathbf{H}}^{b,\rm d}= [\tilde{{\mathbf h}}^{b,\rm d}_1,\ldots,\tilde{{\mathbf h}}^{b,\rm d}_{L}]$. Hence, \eqref{eq:data BS-cu-matrix}~is rewritten as 
\begin{equation}
\begin{aligned}\label{eq:data BS-cu matrix}
	{\mathbf y}^{\rm c}_{b}&= \hat{\mathbf{H}}^{b,\rm c}_{b} {\mathbf{D}}^{1/2}_{{\mathbf{p}}^{\rm c}_{b}}{\mathbf{s}}^{\rm c}_{b}+\sum\limits_{\substack{b'=1,\\b'\neq b}}^{B} \hat{\mathbf{H}}^{b,\rm c}_{b'} {\mathbf{D}}^{1/2}_{{\mathbf{p}}^{\rm c}_{b'}}{\mathbf{s}}^{\rm c}_{b'} + \hat{\mathbf{H}}^{b,\rm d} {\mathbf{D}}^{1/2}_{\mathbf{p}^{\rm d}} {\mathbf{s}}^{b,\rm d}\\
	&+{\mathbf{w}}_{b} -\tilde{\mathbf{H}}^{b,\rm c}_{b} {\mathbf{D}}^{1/2}_{{\mathbf{p}}^{\rm c}_{b}}{\mathbf{s}}^{\rm c}_{b} -\!\!\!\sum\limits_{\substack{b'=1,\\b'\neq b}}^{B} \!\!\!\tilde{\mathbf{H}}^{b,\rm c}_{b'} {\mathbf{D}}^{1/2}_{{\mathbf{p}}^{\rm c}_{b'}}{\mathbf{s}}^{\rm c}_{b'}-\tilde{\mathbf{H}}^{b,\rm d} {\mathbf{D}}^{1/2}_{\mathbf{p}^{\rm d}} {\mathbf{s}}^{b,\rm d}.
\end{aligned}
\end{equation}
As mentioned previously, the channel estimates at each BS~$b$~for D2D transmitters that use the same pilot are parallel vectors. Hence, the collection of~$N$~vectors $\hat{\mathbf{h}}^{b,n_1},\ldots,\hat{\mathbf{h}}^{b,n_N}$~from~\eqref{eq:mmse BS-dd-ni} contains scaled version of all the~$L$~channel estimates~$\hat{{\mathbf h}}^{b,\rm d}_{1}, \ldots,\hat{{\mathbf h}}^{b,\rm d}_{L}$. Therefore, we construct the~$M \times (K+N)$~matrix 
\begin{equation}
\begin{aligned}
	\hat{\mathbf{H}}^{b}_{b} &= \left[\hat{\mathbf{H}}^{b,\rm c}_{b}\quad\hat{\mathbf{h}}^{b,\mathcal{N}_1}\,\,\ldots\,\,\hat{\mathbf{h}}^{b,\mathcal{N}_N}\right]
\end{aligned}
\end{equation}
to describe all the channel directions that are relevant for interference mitigation at BS $b$.
We can write this matrix as
\begin{equation}
\begin{aligned}
	\hat{\mathbf{H}}^{b}_{b} &= {\mathbf{Z}} {\mathbf{D}}^{1/2}_{\bgamma^{b}}
\end{aligned}
\end{equation}
where the matrix~${\mathbf Z}$~has i.i.d.~$\CN\left(0,1\right)$~elements and ${\mathbf{D}}_{\bgamma^{b}}$~is a diagonal matrix with~${{\bgamma}^{b}} = [\gamma^{b,\rm c}_{b,1},\ldots,\gamma^{b,\rm c}_{b,K},\gamma^{b,n_1},\ldots,\gamma^{b,n_N}]$~at the diagonal.
By using this matrix, we define the ZF detection matrix at cell~$b$~as 
\begin{equation}
\begin{aligned}
	{\mathbf V}^{b}&= \hat{\mathbf{H}}^{b}_{b} \left(\left({\hat{\mathbf{H}}_{b}}^{b}\right)^{\rm H} \hat{\mathbf{H}}^{b}_{b} \right)^{-1} {\mathbf{D}}^{1/2}_{\bgamma^{b}} = {\mathbf{Z}}\left({\mathbf{Z}}^{\rm H} {\mathbf{Z}}\right)^{-1}.
\end{aligned}
\end{equation}
By applying this decoder matrix in cell~$b$, the received signal becomes as seen in \eqref{eq:ZF BS} at the top of the next page
\begin{figure*}[t!]
	\begin{equation}
	\begin{aligned}\label{eq:ZF BS}
		({\mathbf V}^{b})^{\rm H}{\mathbf y}^{\rm c}_{b} &= \left({\mathbf{Z}}^{\rm H} {\mathbf{Z}} \right)^{-1} {\mathbf{Z}}^{\rm H} \hat{\mathbf{H}}^{b,\rm c}_{b} {\mathbf{D}}^{1/2}_{{\mathbf{p}}^{\rm c}_{b}}{\mathbf{s}}^{\rm c}_{b}+	\left({\mathbf{Z}}^{\rm H} {\mathbf{Z}} \right)^{-1} {\mathbf{Z}}^{\rm H} \sum\limits_{\substack{b'=1,\\b'\neq b}}^{B} \hat{\mathbf{H}}^{b,\rm c}_{b'} {\mathbf{D}}^{1/2}_{{\mathbf{p}}^{\rm c}_{b'}}{\mathbf{s}}^{\rm c}_{b'} +\left({\mathbf{Z}}^{\rm H} {\mathbf{Z}} \right)^{-1} {\mathbf{Z}}^{\rm H} \hat{\mathbf{H}}^{b,\rm d} {\mathbf{D}}^{1/2}_{\mathbf{p}^{\rm d}} {\mathbf{s}}^{b,\rm d} \\
		&+\left({\mathbf{Z}}^{\rm H} {\mathbf{Z}} \right)^{-1} {\mathbf{Z}}^{\rm H}\left({\mathbf{w}}_{b} -\tilde{\mathbf{H}}^{b,\rm c}_{b} {\mathbf{D}}^{1/2}_{{\mathbf{p}}^{\rm c}_{b}}{\mathbf{s}}^{\rm c}_{b} -\sum\limits_{\substack{b'=1,\\b'\neq b}}^{B} \tilde{\mathbf{H}}^{b,\rm c}_{b'} {\mathbf{D}}^{1/2}_{{\mathbf{p}}^{\rm c}_{b'}}{\mathbf{s}}^{\rm c}_{b'}-\tilde{\mathbf{H}}^{b,\rm d} {\mathbf{D}}^{1/2}_{\mathbf{p}^{\rm d}} {\mathbf{s}}^{b,\rm d}\right)
	\end{aligned}
	\end{equation}
	\hrulefill
\end{figure*}
and the received signal of CU user~$k$~at the cell~$b$~is provided in \eqref{eq:ZF BS user}, that can be seen at the top of the next page,
\begin{figure*}
	\begin{equation}\label{eq:ZF BS user}
	\begin{split}
	\begin{aligned}
		&\left[({\mathbf V}^{b})^{\rm H}{\mathbf y}^{\rm c}_{b}\right]_k= \sqrt{\gamma^{b,\rm c}_k p^{\rm c}_{b,k}} {s}^{\rm c}_{b,k} + \sum\limits_{\substack{b'=1,\\ b' \neq b}} \sqrt{\gamma^{b,\rm c}_{b',k} p^{\rm c}_{b',k}} {s}^{\rm c}_{b',k} + \left[\left({\mathbf{Z}}^{\rm H} {\mathbf{Z}} \right)^{-1} {\mathbf{Z}}^{\rm H}\left({\mathbf{w}}_{b} -\tilde{\mathbf{H}}^{b,\rm c}_{b} {\mathbf{D}}^{1/2}_{{\mathbf{p}}^{\rm c}_{b}}{\mathbf{s}}^{\rm c}_{b} -\sum\limits_{\substack{b'=1,\\b'\neq b}}^{B} \tilde{\mathbf{H}}^{b,\rm c}_{b'} {\mathbf{D}}^{1/2}_{{\mathbf{p}}^{\rm c}_{b'}}{\mathbf{s}}^{\rm c}_{b'}-\tilde{\mathbf{H}}^{b,\rm d} {\mathbf{D}}^{1/2}_{\mathbf{p}^{\rm d}} {\mathbf{s}}^{b,\rm d}\right)\right]_k
	\end{aligned}
	\end{split}
	\end{equation}
	\hrulefill
\end{figure*} 
where the first term is the desired signal and the rest is interference plus noise and its variance is provided in \eqref{eq:varzf} can be seen at the top of the next page.
\begin{figure*}[t!]
	\begin{equation}
	\begin{aligned}\label{eq:varzf}
		&\var\left\{\left[\left({\mathbf{Z}}^{\rm H} {\mathbf{Z}} \right)^{-1} {\mathbf{Z}}^{\rm H}\left({\mathbf{w}}_{b} -\tilde{\mathbf{H}}^{b,\rm c}_{b} {\mathbf{D}}^{1/2}_{{\mathbf{p}}^{\rm c}_{b}}{\mathbf{s}}^{\rm c}_{b} -\sum\limits_{\substack{b'=1,\\b'\neq b}}^{B} \tilde{\mathbf{H}}^{b,\rm c}_{b'} {\mathbf{D}}^{1/2}_{{\mathbf{p}}^{\rm c}_{b'}}{\mathbf{s}}^{\rm c}_{b'}-\tilde{\mathbf{H}}^{b,\rm d} {\mathbf{D}}^{1/2}_{\mathbf{p}^{\rm d}} {\mathbf{s}}^{b,\rm d}\right)\right]_k\right\}  \\
		& =\left(1+ \sum\limits_{b'=1}^{B}\sum\limits_{k'=1}^{K} p^{\rm c}_{b',k'} \left(\beta^{b,\rm c}_{b',k'}-\gamma^{b,\rm c}_{b',k'}\right) + \sum\limits_{l=1}^{L}p^{b,\rm d}_{l} \left(\beta^{b,\rm d}_l -\gamma^{b,\rm d}_l\right)\right) \mathbb{E}\left\{\left[{\mathbf{Z}}^{\rm H} {\mathbf{Z}} \right]^{-1}_{k,k} \right\}\\
		& =\left(1+ \sum\limits_{b'=1}^{B}\sum\limits_{k'=1}^{K} p^{\rm c}_{b',k'} \left(\beta^{b,\rm c}_{b',k'}-\gamma^{b,\rm c}_{b',k'}\right) + \sum\limits_{l=1}^{L}p^{b,\rm d}_{l} \left(\beta^{b,\rm d}_l -\gamma^{b,\rm d}_l\right)\right)\frac{1}{M-\left(K+N\right)}.
	\end{aligned}
	\end{equation}
	\hrulefill
\end{figure*}
Note that this ZF matrix will suppress interference not only between CUs, as in conventional massive MIMO, but also from D2D transmitters. By using the capacity bounding technique in \cite[Sec.~2.3]{redbook}, we have the lower bound on the capacity of the CU~$k$~in cell~$b$ as 
%
\begin{equation}
\begin{aligned}\label{eq:ZF-pilot power}
R^{b,\rm c}_{b,k} & =  \left(1 - \frac{\tau }{\tau_{c}}\right) \times \\
&\log_2 \left(1+\underset{{\rm SINR}^{b,\rm c}_{k,b}}{\underbrace{\frac{p^{\rm c}_{b,k} \left({\tau p^{\rm p,c}_{b,k} \left({\beta^{b,\rm c}_{b,k}}\right)^2}\right) \left(M-\left(K+N\right)\right) }{I^{b,\rm c}_{b,k}\left({\rm ZF}\right)}}} \right),\\
\end{aligned}
\end{equation}
where $I^{b,\rm c}_{b,k}({\rm ZF})$ is defined in \eqref{eq:ZF-pilot power-denominator} that can be seen at the bottom of the next page.
\begin{figure*}[]
\begin{equation}
\begin{aligned}\label{eq:ZF-pilot power-denominator}
& I^{b,\rm c}_{b,k}\left({\rm ZF}\right) =1+\tau \sum\limits_{b'=1}^{B} p^{\rm p,c}_{b',k} \beta^{b,\rm c}_{b',k} +  \left(1+\tau \sum\limits_{b'=1}^{B} p^{\rm p,c}_{b',k} \beta^{b,\rm c}_{b',k}\right)\sum\limits_{b''=1}^{B}\sum\limits_{k''=1}^{K} \left(\frac{1+\tau \sum\limits_{b'\neq b''}^{B} p^{\rm p,c}_{b',k''} \beta^{b,\rm c}_{b',k''}}{1+\tau \sum\limits_{b'=1}^{B} p^{\rm p,c}_{b',k''} \beta^{b,\rm c}_{b',k''}}\right) p^{\rm c}_{b'',k''}\beta^{b,\rm c}_{b'',k''}\\
&+\left(1+\tau \sum\limits_{b'=1}^{B} p^{\rm p,c}_{b',k} \beta^{b,\rm c}_{b',k}\right) \sum\limits_{l=1}^{L}  p^{b,\rm d}_{l}\beta^{b,\rm d}_{l} \left(\frac{1 + \sum\limits_{l'\in \mathcal{N}_i,l'\neq l} \tau p^{\rm p,\rm d}_{l'} \beta^{b,d}_{l'}}{1+\sum\limits_{l'\in \mathcal{N}_i}\tau p^{\rm p, d}_{l'} \beta^{b,\rm d}_{l'}} \right)	+\left(M-\left(K+N\right)\right) \sum\limits_{\substack{b''=1,\\ b'' \neq b}} \tau  p^{\rm c}_{b'',k} p^{\rm p,c}_{b'',k} \left({\beta^{b,\rm c}_{b'',k}}\right)^2.
\end{aligned}
\end{equation}
\hrulefill
\end{figure*}
\subsection{Spectral efficiency of D2D communication}
To calculate the SE of the D2D transmissions, we use the MMSE estimates provided in~\eqref{eq:mmse dd-cu}~and~\eqref{eq:mmse dd}. To detect the desired data from D2D transmitter~$l$, D2D receiver~$l$~multiplies the received signal in~\eqref{eq:data BS-dd}~with the complex conjugate of its channel estimate~$\hat{g}^{l,\rm d}_{l}$. In addition, D2D receiver~$l$~uses the channel estimates~$\hat{g}^{l,\rm c}_{b,k}~\forall b, k$~and~$\hat{g}^{l',\rm d}_{l}~\forall l'$~as side information during data detection. We define~$\Omega =\{\hat{g}^{l,\rm d}_{l'}\}^{L}_{l'=1} ,\{\hat{g}^{l,\rm c}_{b,k}\}^{K,B}_{k=1,b=1}$~as the set of known information. Hence, by utilizing the lower bound for a fading channel with additive non-Gaussian noise and side information provided in \cite[Sec.~2.3]{redbook}, we have the lower bound on the capacity of the D2D communication of pair~$l$~as seen in \eqref{eq:bound dd dedicated} at the top of the next page
\begin{figure*}
{{\begin{equation}
		\begin{aligned}\label{eq:bound dd dedicated}
			&R^{\rm d}_{l} = \left(1 - \frac{\tau }{\tau_c}\right)\times \\
			&{\mathbb E}\left\{ \log_2 \left(1+\overbrace{\frac{p^{\rm d}_{l} \left|{\mathbb E}\left[(\hat{g}^{l,\rm d}_{l})^{*} g^{l,\rm d}_{l}\lvert\Omega\right]\right|^2}{ p^{\rm d}_{l} \var\left\{(\hat{g}^{l,\rm d}_{l})^{*} g^{l,\rm d}_{l}\lvert\Omega\right\} + \var\left\{\sum\limits_{b=1}^{B}\sum\limits_{k=1}^{K}\sqrt{p^{l,\rm c}_{b,k}}(\hat{g}^{l,\rm d}_{l})^{*} g^{l,\rm c}_{b,k} s^{\rm c}_{b,k} + \sum\limits_{l'=1, l'\neq l}^{L}\sqrt{p^{\rm d}_{l'}} (\hat{g}^{l,\rm d}_{l})^{*}{g}^{l,\rm d}_{l'} s^{\rm d}_{l'} + (\hat{g}^{l,\rm d}_{l})^{*} w_{l} \lvert\Omega\right\}}}^{{{\rm  \overline{SINR}}^{\rm d}_{l}}}\right)\right\}\\
			&=\left(1 - \frac{\tau }{\tau_c}\right){\mathbb E}\left\{ \log_2 \left(1+\frac{p^{\rm d}_l \left|\hat{g}^{l,\rm d}_{l}\right|^2}{ p^{\rm d}_l \left({\beta}^{l,\rm d}_{l} - {\gamma}^{l,\rm d}_{l} \right) +  \sum\limits_{b=1}^{B}\sum\limits_{k=1}^{K} p^{\rm c}_{b,k} \left(\left|\hat{g}^{l,\rm c}_{b,k}\right|^2 + {\beta}^{l,\rm c}_{b,k} - {\gamma}^{l,\rm c}_{b,k}\right) + \sum\limits_{l'=1,l'\neq l }^{L} p^{\rm d}_{l'} \left(\left|\hat{g}^{l,\rm d}_{l'}\right|^2 + {\beta}^{l,\rm d}_{l'} - {\gamma}^{l,\rm d}_{l'}\right)+1 }\right)\right\}.
		\end{aligned}
		\end{equation}}}
	\hrulefill
\end{figure*}
where the numerator of the effective SINR, denoted as ${\rm  \overline{SINR}}^{\rm d}_{l}$ in \eqref{eq:bound dd dedicated}, is the power of the desired signal. The first term in the denominator is the variance of channel estimation error from the desired D2D transmitter, caused by the gain uncertainty in the detector. The second term is interference from cellular users and the third term includes the interference from other D2D transmitters. This expression is useful for SE computation but it is not in closed-form as it contains an expectation with respect to the small-scale fading coefficients, thus it is not suitable for power control optimization. Hence, to provide a tractable power control algorithm, we compute an approximation of~\eqref{eq:bound dd dedicated}~by computing the expectation of the numerator and the denominator of the fraction inside the logarithm. The resulting approximation of the SE of D2D pair~$l$~is denoted as~$\tilde{R}^{\rm d}_{l}$~and is given by
\begin{equation}\label{eq:app bound dd dedicated final}
\begin{aligned}
		&R^{\rm d}_{l} \approx \tilde{R}^{\rm d}_{l} = \left(1 - \frac{\tau }{\tau_{c}}\right) \log_2 \left(1+ {\rm  SINR}^{\rm d}_{l}\right),
	\end{aligned}
\end{equation}
	where ${\rm  SINR}^{\rm d}_{l}$ is seen in \eqref{eq:app bound dd dedicated final pilot} at the bottom of the next page.
\begin{figure*}[b!]
	\hrulefill
\begin{equation}\label{eq:app bound dd dedicated final pilot}
\begin{aligned} 
	{\rm  SINR}^{\rm d}_{l} &= \frac{{\mathbb E} \left\{p^{\rm d}_{l} \left|\hat{g}^{l,\rm d}_{l}\right|^2\right\}}{ {\mathbb E} \left\{p^{\rm d}_l \left({\beta}^{l,\rm d}_{l} - {\gamma}^{l,\rm d}_{l} \right) +  \sum\limits_{b=1}^{B}\sum\limits_{k=1}^{K} p^{\rm c}_k \left(\left|\hat{g}^{l,\rm c}_{b,k}\right|^2 + {\beta}^{l,\rm c}_{b,k} - {\gamma}^{l,\rm c}_{b,k}\right) + \sum\limits_{l'=1,l'\neq l }^{L} p^{\rm d}_{l'} \left(\left|\hat{g}^{l,\rm d}_{l'}\right|^2 + {\beta}^{l,\rm d}_{l'} - {\gamma}^{l,\rm d}_{l'}\right)+1 \right\}} \\
	& = \frac{ p^{\rm d}_l {\gamma}^{l,\rm d}_{l}}{p^{\rm d}_l \left({\beta}^{l,\rm d}_{l} - {\gamma}^{l,\rm d}_{l} \right) +  \sum\limits_{b=1}^{B}\sum\limits_{k=1}^{K} p^{\rm c}_{b,k} {\beta}^{l,\rm c}_{b,k} + \sum\limits_{l'=1,l'\neq l }^{L} p^{\rm d}_{l'} {\beta}^{l,\rm d}_{l'} +1 } \\
	& = \frac{ \tau p^{\rm d}_l {p}^{\rm p,d}_{l}\left(\beta^{l,\rm d}_{l}\right)^{2} }{ \left({1+ \sum\limits_{l''\in n_{i}}\tau p^{\rm p,d}_{l''} \beta^{l,\rm d}_{l''}}\right)\left(1 +  \sum\limits_{b=1}^{B}\sum\limits_{k=1}^{K} p^{\rm c}_{b,k} {\beta}^{l\rm c}_{b,k} + \sum\limits_{l'=1,l'\neq l }^{L} p^{\rm d}_{l'} {\beta}^{l,\rm d}_{l'}\right)+ \left({p}^{\rm d}_{l}\beta^{l,\rm d}_{l} + {p}^{\rm d}_{l}\beta^{l,\rm d}_{l} \sum\limits_{l''\in n_{i}\backslash l}\tau p^{\rm p,d}_{l''} \beta^{l,\rm d}_{l''} \right) }, 
	\end{aligned}
\end{equation}
\end{figure*}
Note that in \eqref{eq:app bound dd dedicated final pilot} the last expression follows from replacing the mean square of the channel estimates with their actual representation. The tightness of this approximation is investigated in Appendix \ref{appendix:Approximate-SE}.

\section{Optimization of Power Allocation}
\label{optimization}
In this section, which contains the main contribution of the paper, we investigate different power control schemes for the studied system model. We consider power control where the data transmission powers of CUs and D2D transmitters are selected to maximize the max-min fairness and max product SINR optimization objectives. We also propose a power control scheme for joint pilot and data transmission optimization powers for these two objectives, which is considerable more complicated. In all these cases, we consider both the cases of MR and ZF processing at the BSs, which require different solution algorithms. Note that in all optimization problems provided in this section,~$P_{\rm max}>0$~denotes the maximum transmit power of the users. In {Table} \ref{table:1}, we summarize the different problems that we solve and the corresponding subsections where the solutions are found.
\begin{table}[hpbt!]
	\vspace{0.2in}
	\centering
	\caption{Optimization problems and the sections where solution approaches are given.}
	\vspace{-0.1in}
	\begin{center}
		\begin{tabular}{| p{3.8cm} | p{1.8cm}| p{1.7cm} | }
			\hline
			 & MR processing & ZF processing\\
			\hline
			max-min, data power control& Section~\ref{max-min data}&Section~\ref{max-min data} \\
			\hline 
			max-min, data \& pilot power control& Section~\ref{MRC Max-min fairness pilot}&Section~\ref{ZF Proportional fairness pilot} \\
			\hline
			max product SINR, data power control&Section~\ref{max-prod data} &Section~\ref{max-prod data} \\
			\hline
			max product SINR, data \& pilot power control & Section~\ref{MRC Proportional fairness pilot}& Section~\ref{ZF Proportional fairness pilot}\\
			\hline
		\end{tabular}\label{table:1}
	\end{center} 
\end{table}

\subsection{Data power control}
\label{Data power control}
In this subsection, we select the data powers in the system to either maximize the minimum SE in the network (i.e., provide a uniformly high SE among the users) or the product of the SINRs (i.e., the sum of the logarithms of the SINRs, which is a variation of classical sum-SE optimization that guarantees non-zero SE to all users \cite[Sec.~7.1]{bjornson2017massive}).

\subsubsection{Max-min fairness}
\label{max-min data}
We will now select the data powers in the system to maximize the minimum SE of the CUs in all cells as well as all the D2D pairs, for both the cases of MR and ZF processing at the BSs. The max-min problem is formulated in epigraph form as 
\begin{maxi!}[2]
{\substack{\{p^{\rm d}_l,p^{\rm c}_{b,k}\},\lambda}}{\lambda \label{eq:opti-objective_data}}
{\label{eq:opti-original-epigraph_data}}{}
\addConstraint{{R^{b,\rm c}_k}\geq \lambda ~~\forall b,k,\quad{\tilde{R}^{\rm d}_l} \geq \lambda~~\forall l \label{eq:C1optidata}}{}{}
\addConstraint{ 0\leq p^{\rm d}_l}{\leq P_{\rm max}~~\forall l \label{eq:C2opti_data}}
\addConstraint{ 0\leq p^{\rm c}_{b,k}}{\leq P_{\rm max}~~ \forall b,k, \label{eq:C3opti_data}}
\end{maxi!}
where~$\lambda$~indicates the SE level that is guaranteed to all CUs and all D2D pairs and this variable is to be maximized. Note that we use the SE approximation in~\eqref{eq:app bound dd dedicated final}~for the D2D pairs to achieve a tractable problem formulation, while the exact SE will be used later to evaluate the performance. To solve~\eqref{eq:opti-original-epigraph_data}, we fix~$\lambda$~and solve the resulting linear feasibility optimization problem which determines if the given SE level $\lambda$ is achievable or not. 
This is a linear program that can be solved efficiently using CVX~\cite{cvx}, or some other standard solver for linear programs.
To find the optimal~$\lambda$, we can perform a line search over the interval~$ [0,\lambda^{\rm u}]$, where~$\lambda^{\rm u}$~is an upper bound on the SE level that can be guaranteed to all users.
One way to obtain $\lambda^{\rm u}$ is the compute upper bounds on the SEs achievable by the cellular and D2D users by neglect the interference terms and assuming that everyone transmits at maximum power. The user that achieves the lowest SE under these circumstances limits the SE level that can be guaranteed to all the users.  Mathematically, this can be computed as
\begin{equation}\label{eq:utopia_mrc}
\begin{split}
\begin{aligned}	
\lambda^{\rm u}= &	
\underset{ l, k, b,}\min\Big\{\log_2\left(1+P_{\max}p^{\rm d}_l {\gamma}^{\rm d}_{l,l}\right),{\log_2\left(1+P_{\max}(M)\gamma^{b,\rm c}_{b,k}\right)}\Big\} 
\end{aligned}
\end{split}
\end{equation}
in the case of MR processing. With ZF processing, it can be computed as
\begin{equation}\label{eq:utopia_zf}
\begin{split}
\begin{aligned}	
\lambda^{\rm u}= & \underset{l, k, b}\min \Big\{\log_2\left(1+P_{\max}p^{\rm d}_l {\gamma}^{\rm d}_{l,l}\right),\\&{\log_2\left(1+P_{\max}(M-({K+N}))\gamma^{b,\rm c}_{b,k}\right)}\Big\}.
\end{aligned}
\end{split}
\end{equation}
We use the bisection line search algorithm for solving problem~\eqref{eq:opti-original-epigraph_data} \cite{Boyd}, as described in {{Algorithm}}~$1$. This algorithm can solve problem \eqref{eq:opti-original-epigraph_data} for both case of MR and ZF processing at the BS, although the optimal solution will be different. 

{Algorithm}~$1$ consists of solving the linear program~\eqref{eq:opti-original-epigraph_data} that has complexity of $\mathcal{O}(\left(KB+L+1\right)^2 \left(KB+L\right))$, in addition, total number of iteration for convergence of bisection line search is given as $\ceil{\log_2\left(\frac{\lambda^{u}}{\epsilon}\right)}$. Therefore, {{Algorithm}}~$1$ has computational complexity order of $\mathcal{O}(\ceil{\log_2\left(\frac{\lambda^{u}}{\epsilon}\right)}\left(KB+L+1\right)^2 \left(KB+L\right))$ \cite[Ch.~1]{Boyd}. The exact complexity depends on which solver of linear programs that is used and on the channel realizations that appear in the instance of the problem that is being solved.

\begin{algorithm}[t]
	\label{alg}
	\SetAlgoLined
	\textbf{Input:} $\lambda^{\rm l}= 0$ and $\lambda^{\rm u}$,\; line search accuracy $\epsilon$
	
	\While{$\lambda^{\rm u}- \lambda^{\rm l} >\epsilon$}{
		Set $\lambda = \frac{\left(\lambda^{\rm l}+ \lambda^{\rm u}\right)}{2}$\;\\
		Solve the feasibility problem of finding $\{p^{\rm d}_l\}^{L}_{l=1},\{p^{\rm c}_k\}^{K}_{k=1}$ such that \eqref{eq:C1optidata}-\eqref{eq:C1optidata} are satisfied.\\
		\eIf{the feasibility problem is not satisfied}{
			Set $\lambda^{\rm u} = \lambda$\;
		}{
			Set $\lambda^{\rm l} = \lambda $\;\\
			Set $\{p^{\rm d}_l\}^{L}_{l=1},\{p^{\rm c}_k\}^{K}_{k=1} $ as solution for \eqref{eq:opti-original-epigraph}.
		}
	}
	\textbf{Output:}{ Optimal value of $\{p^{\rm d}_l\}^{L}_{l=1},\{p^{\rm c}_k\}^{K}_{k=1}$. }
	\caption{ Bisection Algorithm}
\end{algorithm}

\subsubsection{Max product SINR}
\label{max-prod data}
The max-min fairness power control approach proposed in the previous section may lead to low overall performance. This is due to the fact that the network-wide performance is limited by the channel condition of the most unfortunate users in the network. When increasing the number of active users in the network, the risk of having an extremely bad channel for some of the active users is high. This is handled in~\cite{redbook}~by dropping such users from service before applying the max-min power control. Another potential approach is to perform max product SINR power control instead. To investigate this, we will now change the optimization objective to maximize the product of the SINRs of all users in the network. It offers some level of fairness as well as approximately maximizing the sum SE of the network \cite[Sec.~7.1]{bjornson2017massive}. The optimization problem is formulated as
\begin{maxi!}[2] 
	{\substack{\{p^{\rm d}_l,p^{\rm c}_{b,k}\},\lambda^{b,\rm c}_{b,k}~\zeta^{\rm d}_{l}}}{\prod_{l=1}^{L}~\zeta^{\rm d}_{l}~\prod_{b=1}^{B}\prod_{k=1}^{K_b}\lambda^{b,\rm c}_{b,k}  \label{eq:maxprod-objective1-data}}
	{\label{eq:opti max product1-data}}{}
	\addConstraint{{{\rm SINR}^{b,\rm c}_{b,k}}\geq \lambda^{b,\rm c}_{b,k}  ~~\forall k,b \label{eq:C1maxprod0-data}}{}{}{}
	\addConstraint{{\rm SINR}^{\rm d}_l \geq \zeta^{\rm d}_{l}~~\forall l \label{eq:C1maxprod1-data}}{}{}
	\addConstraint{ 0\leq p^{\rm d}_l}{\leq P_{\rm max},~~\forall l \label{eq:C2paxprod1-data}}
	\addConstraint{ 0\leq p^{\rm c}_{b,k}}{\leq P_{\rm max},~~ \forall k,b \label{eq:C3maxprod1-data}}
\end{maxi!}
where for the case of MR processing the~${\rm SINR}^{b,\rm c}_{b,k}$~in~\eqref{eq:C1maxprod0-data}~constraint is given in~\eqref{eq:MRC-pilot power}~and for ZF processing we replace it with the SINR expression given in~\eqref{eq:ZF-pilot power}. The SINR constraints can be rearranged as geometric constraints, thus the max product SINR with data power control is a geometric program and it can be solved efficiently by using standard convex optimization solvers such as CVX~\cite{cvx}. The interested reader can find the basic terminology and property of geometric programming in Appendix \ref{appendix:Useful definitions}.

\subsection{Joint pilot and data power control for MR processing}
In the previous part, it was assumed that the pilot transmit power of the CUs and D2D transmitters are fixed, e.g., at its maximum value. This is the simplest assumption for pilot transmission and makes perfect sense in a system without pilot contamination. However, in the studied multi-cell scenario with pilot contamination, we can improve the performance by optimizing the pilot transmission powers in addition to data power control. Hence, the goal of this subsection is to perform joint pilot and data power control for the case of MR processing at the BSs with the same optimization objectives as above.
\subsubsection{Max-min fairness}
\label{MRC Max-min fairness pilot}
In this part, similar to Section~\ref{max-min data}, the optimization objective is to maximize the minimum SE of the CUs in all cells as well as all the D2D pairs. The difference from \eqref{eq:opti-original-epigraph_data} is that we have extra variables and constraints when we want to optimize the pilot powers as well. The optimization problem is formulated in epigraph form as
\begin{maxi!}[2]
	{\substack{\{p^{\rm d}_l,p^{\rm c}_{b,k},p^{\rm p, c}_{b,k},p^{\rm p, d}_l\},\lambda}}{\lambda \label{eq:opti-objective}}
	{\label{eq:opti-original-epigraph}}{}
	\addConstraint{R^{b,\rm c}_k \left({\rm MR}\right) \geq \lambda ~~\forall k,b,\quad\tilde{R}^{\rm d}_l \geq \lambda~~\forall l \label{eq:C1,C2opti}}{}{}
	\addConstraint{ 0\leq p^{\rm d}_l}{\leq P_{\rm max},~~\forall l \label{eq:C3p43}}
	\addConstraint{ 0\leq p^{\rm c}_{b,k}}{\leq P_{\rm max},~~ \forall k,b \label{eq:C4p43}}
	\addConstraint{ 0\leq p^{\rm p, d}_l}{\leq P_{\rm max},~~\forall l  \label{eq:C5p43}}
	\addConstraint{ 0\leq p^{\rm p, c}_{b,k}}{\leq P_{\rm max},~~\forall k,b \label{eq:C6p43}},
\end{maxi!}
where~$\lambda$~indicates the SE that is guaranteed to all CUs and all D2D pairs and this variable is to be maximized. By rearranging in the SINR constraints, this optimization problem is in the form of geometric program and as it is mentioned in previous part, this type of optimization problem can be solved efficiently by standard convex optimization solvers such as CVX~\cite{cvx}.
\subsubsection{Max product SINR}
\label{MRC Proportional fairness pilot}
For the case of MR processing at the BSs, the optimization problem is formulated as
\begin{maxi!}[2]
	{\substack{\{p^{\rm d}_l,p^{\rm c}_{b,k},p^{\rm p, d}_l,p^{\rm p, c}_{b,k}\},\lambda^{b,\rm c}_{b,k}~\zeta^{\rm d}_{l}}}{\prod_{l=1}^{L}~\zeta^{\rm d}_{l}~\prod_{b=1}^{B}\prod_{k=1}^{K_b}\lambda^{b,\rm c}_{b,k}  \label{eq:maxprod-objective1}}
	{\label{eq:opti max product1}}{}
	\addConstraint{{{\rm SINR}^{b,\rm c}_{b,k} \left({\rm MR}\right)}\geq \lambda^{b,\rm c}_{b,k}  ~~\forall k,b}{}{}{}
	\addConstraint{{\rm SINR}^{\rm d}_l \geq \zeta^{\rm d}_{l}~~\forall l \label{eq:C1maxprod1}}{}{}
	\addConstraint{ 0\leq p^{\rm d}_l}{\leq P_{\rm max},~~\forall l \label{eq:C2paxprod1}}
	\addConstraint{ 0\leq p^{\rm c}_{b,k}}{\leq P_{\rm max},~~ \forall k,b \label{eq:C3maxprod1}}
	\addConstraint{ 0\leq p^{\rm p,d}_l}{\leq P_{\rm max},~~\forall l \label{eq:C4maxprod1}}
	\addConstraint{ 0\leq p^{\rm p, c}_{b,k}}{\leq P_{\rm max},~~\forall b,k \label{eq:C5maxprod1}},
	\vspace{-.3cm}
\end{maxi!}
where in the first and second constraints, the SINR expressions are provided from the SE expression given in \eqref{eq:MRC-pilot power} and \eqref{eq:app bound dd dedicated final pilot}, respectively. Similar to \eqref{eq:opti max product1-data}, the SINR constraints can be rearranged as geometric constraints, thus \eqref{eq:opti max product1} can be solved efficiently as a geometric program.

\subsection{Joint pilot and data power control for ZF processing}
\label{ZF Proportional fairness pilot}
In this subsection, we propose an efficient algorithm for pilot and data power control for the case of ZF processing at the BSs. Due to the more complicated structure of the SINRs with ZF, we can only solve the power control problems to local optimality.\footnote{Global optimization techniques might be used to find the global optimum, but these algorithms have a prohibitive computational complexity for most practical purposes.} We focus on max product SINR optimization problem, but it is straightforward to apply the same methodology for the case of max-min fairness optimization problem. The optimization problem of interest is formulated as
\begin{maxi!}[2]
	{\substack{\{p^{\rm d}_l, p^{\rm c}_{b,k}, p^{\rm p,d}_l, p^{\rm p, c}_{b,k}\},\lambda^{b,\rm c}_{b,k}~\zeta^{\rm d}_{l}}}{\prod_{l=1}^{L}~\zeta^{\rm d}_{l}~\prod_{b=1}^{B}\prod_{k=1}^{K_b}\lambda^{b,\rm c}_{b,k}  \label{eq:ZF-maxprod-objective}}
	{\label{eq:opti max product}}{}
	\addConstraint{{{\rm SINR}^{b,\rm c}_{b,k}\left({\rm ZF}\right)}\geq \lambda^{b,\rm c}_{b,k}  ~~\forall k,b \label{eq:C1maxprod}}{}{}{}
	\addConstraint{{\rm SINR}^{\rm d}_l \geq \zeta^{\rm d}_{l}~~\forall l \label{eq:C1-1maxprod}}{}{}
	\addConstraint{ 0\leq p^{\rm d}_l}{\leq P_{\rm max},~~\forall l \label{eq:C2paxprod}}
	\addConstraint{ 0\leq p^{\rm c}_{b,k}}{\leq P_{\rm max},~~ \forall k,b \label{eq:C3maxprod}}
	\addConstraint{ 0\leq p^{\rm p,d}_l}{\leq P_{\rm max},~~\forall l}
	\addConstraint{ 0\leq p^{\rm p, c}_{b,k}}{\leq P_{\rm max},~~\forall b,k},
\end{maxi!} 
where the SINR constraint in~\eqref{eq:C1maxprod}~is provided in the SE expression given in~\eqref{eq:ZF-pilot power}. The optimization constraints~\eqref{eq:C1maxprod}~are not posynomial; hence, the optimization problem is not a geometric program (see Appendix~\ref{appendix:Useful definitions} for details). In fact, the SINR constraints are signomial constraints, thus there is no way to reformulate them to make the problem efficiently solvable, with resorting to approximations. The signomial constraints are in the form of fractions of two posynomial functions. To address this problem, we develop an algorithm that finds a local optimum. We propose to first lower bound the posynomials function in the denominator of the signomial constraint with a monomial function.  Then, the resulting function is a fraction of a posynomial with the monomial which is also a posynomial function. Consequently, the lower-bounded SINR constraints is a geometric constraint, so that we have obtained an approximation of the original problem that can be solved as a geometric program, using standard solvers such as CVX~\cite{cvx}. By successively updating the approximation, a local optimum is guaranteed to be found. The detailed procedure is described below.
\begin{lemma}\label{lemma1}
\cite[Lemma 1]{gpPowercont} Suppose $u_j\left({\mathbf x}\right)$ is a monomial function. A posynomial function $f\left({\mathbf x}\right)$ is given as
	\begin{equation}
		f\left({\mathbf x}\right) = \sum_{j} u_j\left({\mathbf x}\right)
	\end{equation}
		which can be lower bounded by the following monomial function $\tilde{f}\left({\mathbf x}\right)$ 
		\begin{equation}
		 f\left({\mathbf x}\right) \geq \tilde{f}\left({\mathbf x}\right) = \prod_{j} \left(\frac{u_j\left({\mathbf x}\right)}{Q_j}\right)^{Q_j},
		\end{equation}		
\end{lemma}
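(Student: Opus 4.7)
The plan is to derive the bound as an immediate consequence of the weighted arithmetic--geometric mean (AM--GM) inequality. Recall that for any positive reals $a_j$ and positive weights $Q_j$ satisfying $\sum_j Q_j = 1$, one has $\sum_j Q_j a_j \geq \prod_j a_j^{Q_j}$, with equality if and only if all the $a_j$ coincide. The statement of the lemma is implicitly assuming this normalization on the $Q_j$, and the whole proof reduces to plugging in the right quantities.

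First, I would fix positive weights $Q_j$ with $\sum_j Q_j = 1$. Then, since $u_j(\mathbf{x}) > 0$ for positive $\mathbf{x}$ (as each $u_j$ is a monomial), I would apply AM--GM with $a_j = u_j(\mathbf{x})/Q_j$ to obtain
\begin{equation}
f(\mathbf{x}) \;=\; \sum_j u_j(\mathbf{x}) \;=\; \sum_j Q_j \cdot \frac{u_j(\mathbf{x})}{Q_j} \;\geq\; \prod_j \left(\frac{u_j(\mathbf{x})}{Q_j}\right)^{Q_j} \;=\; \tilde{f}(\mathbf{x}),
\end{equation}
which is precisely the claimed inequality. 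Next I would verify that $\tilde{f}$ is actually a monomial, which is the whole reason this bound is useful for converting the signomial SINR constraint in \eqref{eq:C1maxprod} into a geometric-program-compatible form: each $u_j(\mathbf{x})/Q_j$ is a monomial in $\mathbf{x}$, raising a monomial to a fixed real power yields a monomial, and a product of monomials in $\mathbf{x}$ is again a monomial.

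Finally, I would comment on the selection of the weights, since the lemma alone does not prescribe them. The AM--GM bound is tight precisely when $u_j(\mathbf{x})/Q_j$ is independent of $j$, i.e., when $Q_j$ is proportional to $u_j(\mathbf{x})$. Therefore, given a current iterate $\mathbf{x}^{(t)}$, the tightest monomial lower bound at that point is achieved by choosing
\begin{equation}
Q_j \;=\; \frac{u_j(\mathbf{x}^{(t)})}{\sum_i u_i(\mathbf{x}^{(t)})},
\end{equation}
and this choice ensures both $\tilde{f}(\mathbf{x}^{(t)}) = f(\mathbf{x}^{(t)})$ and $\nabla \tilde{f}(\mathbf{x}^{(t)}) = \nabla f(\mathbf{x}^{(t)})$ in the log--log domain, which is exactly what the successive approximation algorithm needs to guarantee convergence to a KKT point of the original signomial problem.

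There is no real obstacle here; the result is a one-line invocation of AM--GM. The only care required is in making the normalization $\sum_j Q_j = 1$ explicit and in emphasizing that $\tilde{f}$ inherits the monomial property, which is what makes the lemma an operational tool rather than a mere inequality.
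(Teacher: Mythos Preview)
Your argument is correct: the bound is an immediate application of the weighted AM--GM inequality with the normalization $\sum_j Q_j = 1$, and your discussion of the weight choice $Q_j = u_j(\mathbf{x}_0)/f(\mathbf{x}_0)$ matches the tightness and first-order-matching conditions used in the successive approximation. The paper does not give its own proof but simply cites the external reference, so there is nothing to compare against beyond noting that your AM--GM derivation is exactly the standard proof in that literature.
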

where~${Q_j}\ge 0$~is a weight. For any fixed and element-wise positive ${\mathbf x}_0$, $\tilde{f}\left({\mathbf x}\right)$ is the best local monomial approximation of $f\left({\mathbf x}\right)$ near ${\mathbf x}_0$ in the sense of first order Taylor approximation if we choose the weight as
\begin{equation}\label{eq:approx-weight}
Q_j = \frac{u_j\left({\mathbf x}_0\right)}{f\left({\mathbf x}_0\right)}.
\end{equation}

\begin{proof}
	The proof is provided in \cite[Lemma 1]{gpPowercont}.
	\end{proof}
In the SINR expression of ZF processing in~\eqref{eq:ZF-pilot power},~we have~\eqref{eq:ZF-pilot power-denominator}~in the denominator. In~\eqref{eq:ZF-pilot power-denominator}, we find 
\begin{equation}
\begin{aligned} 
f\left(p^{\rm p,c}_{b'',k''}\right) &= \left(\frac{1+\tau \sum\limits_{b'\neq b''}^{B} p^{\rm p,c}_{b',k''} \beta^{b,\rm c}_{b',k''}}{1+\tau \sum\limits_{b'=1}^{B} p^{\rm p,c}_{b',k''} \beta^{b,\rm c}_{b',k''}}\right),~~\forall k'',b'', \\
\end{aligned}
\end{equation}

\begin{equation}
\begin{aligned}
 g\left(p^{\rm p,d}_{l}\right) &= \left(\frac{1+\tau \sum\limits_{l'\in \mathcal{N}_i,l'\neq l} p^{\rm p,d}_{l'} \beta^{b,\rm d}_{l'}}{1+\tau \sum\limits_{l'\in \mathcal{N}_i} p^{\rm p,d}_{l'} \beta^{b,\rm d}_{l'}}\right),~~\forall l,
 \end{aligned}
\end{equation}
which prevent the denominator from being a posynomial function. However, the denominator of $f\left(p^{\rm p,c}_{b'',k''}\right)$ and $g\left(p^{\rm p,d}_{l}\right)$,  can be lower bounded using Lemma \ref{lemma1} and we have the following upper bound for~$f\left(p^{\rm p,c}_{b'',k''}\right)$ and $g\left(p^{\rm p,d}_{l}\right)$
\begin{equation}
\begin{aligned}
f\left(p^{\rm p,c}_{b'',k''}\right) & \leq \tilde{f}\left(p^{\rm p,c}_{b'',k''}\right)\\
&= \left(\frac{1+\tau \sum\limits_{b'\neq b''}^{B} p^{\rm p,c}_{b',k''} \beta^{b,\rm c}_{b',k''}}{ C\prod\limits_{b'=1}^{B} \left(\frac{\tau \beta^{b,\rm c}_{b',k''} p^{\rm p,c}_{b',k''} }{Q_{b',k''}}\right)^{Q_{b',k''}}}\right),~~\forall, k'',b'',
 \end{aligned}
\end{equation}
\begin{equation}
\begin{aligned}
g\left(p^{\rm p,d}_{l}\right)& \leq \tilde{g}\left(p^{\rm p,d}_{l}\right)= \left(\frac{1+\tau \sum\limits_{l'\in \mathcal{N}_i,l'\neq l} p^{\rm p,d}_{l'} \beta^{b,\rm d}_{l'}}{ D\prod\limits_{l'\in \mathcal{N}_i} \left(\frac{\tau \beta^{b,\rm d}_{l'} p^{\rm p,d}_{l'}}{Q_{l'}}\right)^{Q_{l'}}}\right),~~\forall l,
 \end{aligned}
 \end{equation}
 where  $Q_{b',k''}$ and $ Q_{l'}$ are the weights  calculated by using~\eqref{eq:approx-weight}. In addition, we use \eqref{eq:approx-weight} to calculate~$C$~and~$D$ as
\begin{equation}
 \begin{aligned}
 	C &=  \bigg({1+\tau \sum\limits_{b'=1}^{B} p^{\rm p,c}_{b',k''} \beta^{b,\rm c}_{b',k''}}\bigg)^{\bigg(\frac{1}{1+\tau \sum\limits_{b'=1}^{B} p^{\rm p,c}_{b',k''} \beta^{b,\rm c}_{b',k''}}\bigg)}, \\
\end{aligned}
\end{equation}
\begin{equation}
\begin{aligned}
 	D &= {\bigg(1+\tau \sum\limits_{l'\in \mathcal{N}_i} p^{\rm p,d}_{l'} \beta^{b,\rm d}_{l'}\bigg)}^{\bigg(\frac{1}{1+\tau \sum\limits_{l'\in \mathcal{N}_i} p^{\rm p,d}_{l'} \beta^{b,\rm d}_{l'}}\bigg)},\\
 \end{aligned}
 \end{equation}
that are the corresponding weights for the $1$ in the denominator of $f\left(p^{\rm p,c}_{b'',k''}\right)$ and $g\left(p^{\rm p,d}_{l}\right)$, respectively.

Hence, by using these posynomial upper bounds on ~$f\left(p^{\rm p,c}_{b'',k''}\right)$ and $g\left(p^{\rm p,d}_{l}\right)$, we have the following lower bound on the SINR constraint in~\eqref{eq:C1maxprod}:
\begin{equation}\label{eq:appxSINR}
\begin{aligned}
\widetilde{\rm SINR}^{b,\rm c}_{b,k} & =  \frac{p^{\rm c}_{b,k} \left({\tau p^{\rm p,c}_{b,k} \left({\beta^{b,\rm c}_{b,k}}\right)^2}\right) \left(M-\left(K+N\right)\right) }{\tilde{I}^{b,\rm c}_{b,k} \left(\rm ZF\right)},\\
\end{aligned}
\end{equation}
where $\tilde{I}^{b,\rm c}_{b,k} \left(\rm ZF\right)$ is defined as seen in \eqref{eq:I_ZF} at the top of the page.
\begin{figure*}
\begin{equation}\label{eq:I_ZF}
\begin{aligned}
& \tilde{I}^{b,\rm c}_{b,k} \left(\rm ZF\right) =1+\tau \sum\limits_{b'=1}^{B} p^{\rm p,c}_{b',k} \beta^{b,\rm c}_{b',k} +  \left(1+\tau \sum\limits_{b'=1}^{B} p^{\rm p,c}_{b',k} \beta^{b,\rm c}_{b',k}\right)\sum\limits_{b''=1}^{B}\sum\limits_{k''=1}^{K} \left(\frac{1+\tau \sum\limits_{b'\neq b''}^{B} p^{\rm p,c}_{b',k''} \beta^{b,\rm c}_{b',k''}}{ C\prod\limits_{b'=1}^{B} \left(\frac{\tau \beta^{\rm p,c}_{b',k''} p^{\rm p,c}_{b',k''}}{Q_{b',k''}}\right)^{Q_{b',k''}}}\right) p^{\rm c}_{b'',k''}\beta^{b,\rm c}_{b'',k''} +\\
&+\left(1+\tau \sum\limits_{b'=1}^{B} p^{\rm p,c}_{b',k} \beta^{b,\rm c}_{b',k}\right) \sum\limits_{l=1}^{L}  p^{b,\rm d}_{l}\beta^{b,\rm d}_{l} \left(\frac{1+\tau \sum\limits_{l'\in \mathcal{N}_i,l'\neq l} p^{\rm p,d}_{l'} \beta^{b,\rm d}_{l'}}{ D\prod\limits_{l'\in \mathcal{N}_i} \left(\frac{\tau \beta^{\rm p,d}_{l'} p^{\rm p,d}_{l'}}{Q_{l'}}\right)^{Q_{l'}}}\right)	+\left(M-\left(K+N\right)\right) \sum\limits_{b''=1, b'' \neq b} \tau  p^{\rm c}_{b'',k} p^{\rm p,c}_{b'',k} \left({\beta^{b,\rm c}_{b'',k}}\right)^2.
\end{aligned}
\end{equation} 
\hrulefill
\end{figure*}                             
We utilize the lower bounded SINR expression in~\eqref{eq:appxSINR}~in the optimization problem~\eqref{eq:ZF-maxprod-objective}~and can then solve it using geometric programming. To achieve a locally optimum to the original optimization problem, we propose a successive approximation algorithm in which we iteratively solve the optimization problem with the approximated SINR expression for CUs and update the pilot power coefficients of CUs and the D2D transmitters iteratively until a convergence criteria is satisfied. By using \eqref{eq:approx-weight},  for iteration $i$ we have
\begin{equation}
\begin{aligned}\label{eq:weightupdates}
Q^{\left(i\right)}_{b',k''} &= \frac{\tau p^{\rm p,c}_{b',k''}\left(i-1\right)}{1+\tau \sum\limits_{b'=1}^{B} p^{\rm p,c}_{b',k''}\left(i-1\right) \beta^{b,\rm c}_{b',k''}},\\
Q^{\left(i\right)}_{l'} &= \frac{\tau p^{\rm p,d}_{l'}\left(i-1\right)}{1+\tau \sum\limits_{l' \in \mathcal{N}_i} p^{\rm p,d}_{l'}\left(i-1\right) \beta^{b,\rm d}_{l'}},
\end{aligned}
\end{equation}
\begin{equation}
\begin{aligned}\label{eq:weight_one_updates}
C^{\left(i\right)} = \left({1}{\Big/}{C^{\left(i\right)}_1}\right)^{C^{\left(i\right)}_1},\quad\quad
 D^{\left(i\right)} = \left({1}{\Big/}{D^{\left(i\right)}_1}\right)^{D^{\left(i\right)}_1},
\end{aligned}
\end{equation}
where~$C^{\left(i\right)}_1 = \frac{1}{1+\tau \sum\limits_{b'=1}^{B} p^{\rm p,c}_{b',k''}\left(i-1\right) \beta^{b,\rm c}_{b',k''}}$~and~$D^{\left(i\right)}_1  = \frac{1}{1+\tau \sum\limits_{l' \in \mathcal{N}_i} p^{\rm p,d}_{l'}\left(i-1\right) \beta^{b,\rm d}_{l'}}$. In addition, $p^{\rm p,c}_{b',k''}\left(i-1\right),p^{\rm p,d}_{l'}\left(i-1\right)$ are the points at arbitrary iteration $i-1$ of the algorithm which the approximations are applied. The detailed steps is provided in {{Algorithm}}~$2$. 
\begin{algorithm}[hbpt!]
	\label{alg2}
	\DontPrintSemicolon
	\SetAlgoLined
	\textbf{Input:} Initialize $p^{\rm p,c}_{b',k''} \left(0\right) ,p^{\rm p,d}_{l'}\left(0\right) $ for $i=0$; $\epsilon' > 0$ \\
	\Repeat{the convergence is satisfied: \begin{align*}
		 &\left|p^{\rm p,c}_{b',k''}\left(i\right)-p^{\rm p,c}_{b',k''}\left(i-1\right)\right|<\epsilon' \text{and}\\
		&\left|p^{\rm p,d}_{l'} \left(i\right)-p^{\rm p,d}_{l'} \left(i-1\right)\right|<\epsilon'\end{align*}\vspace{-5mm}}{
		$i = i+1$\\
		Compute the weight values $Q^{\left(i\right)}_{b',k''}$ ,$Q^{\left(i\right)}_{l'}$, $C^{\left(i\right)}$ and $D^{\left(i\right)}$~by using \eqref{eq:weightupdates} and \eqref{eq:weight_one_updates};\\
		Solve the optimization problem \eqref{eq:opti max product} with the lower bounded SINR constraint given in \eqref{eq:appxSINR}; \\ 
		Put $p^{\rm p,c}_{b',k''} \left(i\right),p^{\rm p,d}_{l'} \left(i\right)$ $p^{\rm c}_{b',k''} \left(i\right)$ and $p^{\rm d}_{l'} \left(i\right)$ equal to solution of the optimization problem;\\
	}
	\textbf{Output:}{Optimal value of $\{p^{\rm p, d}_l, p^{\rm d}_l\}^{L}_{l=1},\{p^{\rm p, c}_k, p^{\rm c}_k\}^{K}_{k=1}$. }
	\caption{Successive Approximation Algorithm}
\end{algorithm}
\begin{lemma}\label{lemma2}
	{{Algorithm}}~$2$ provides a local optimal Karush-Kuhn-Tucker (KKT) point to \eqref{eq:ZF-maxprod-objective}.
\end{lemma}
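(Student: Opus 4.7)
The plan is to place Algorithm~2 into the classical Marks--Wright single-condensation framework for successive convex approximation, which guarantees convergence to a KKT point of the original problem whenever the monomial surrogate satisfies three tightness properties at the current iterate. Concretely, let $F(\mathbf{p})$ denote any of the posynomial denominators $f(p^{\rm p,c}_{b'',k''})$ or $g(p^{\rm p,d}_l)$ appearing inside \eqref{eq:ZF-pilot power-denominator}, and let $\tilde F(\mathbf{p};\mathbf{p}^{(i-1)})$ denote its monomial lower bound from Lemma~\ref{lemma1}. The three properties to verify at the point $\mathbf{p}^{(i-1)}$ are: (i) $\tilde F(\mathbf{p};\mathbf{p}^{(i-1)}) \le F(\mathbf{p})$ for all feasible $\mathbf{p}$; (ii) $\tilde F(\mathbf{p}^{(i-1)};\mathbf{p}^{(i-1)}) = F(\mathbf{p}^{(i-1)})$; and (iii) $\nabla \tilde F(\mathbf{p}^{(i-1)};\mathbf{p}^{(i-1)}) = \nabla F(\mathbf{p}^{(i-1)})$.

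Property (i) is the weighted arithmetic--geometric mean inequality applied to the sum of monomials that defines the posynomial, i.e., the content of Lemma~\ref{lemma1}. Equality in AM--GM holds precisely when all averaged terms are equal, and with the weights set by \eqref{eq:approx-weight} to $Q_j = u_j(\mathbf{p}^{(i-1)})/F(\mathbf{p}^{(i-1)})$, one checks by direct substitution that $\tilde F(\mathbf{p}^{(i-1)};\mathbf{p}^{(i-1)}) = F(\mathbf{p}^{(i-1)})$, giving (ii). For (iii), a log-derivative computation for the monomial $\tilde F$ yields $\partial \log \tilde F/\partial \log p = \sum_j Q_j\,\partial \log u_j/\partial \log p$, while $\partial \log F/\partial \log p$ evaluated at $\mathbf{p}^{(i-1)}$ equals $\sum_j (u_j(\mathbf{p}^{(i-1)})/F(\mathbf{p}^{(i-1)}))\,\partial \log u_j/\partial \log p$; the two coincide exactly because of the choice \eqref{eq:approx-weight}. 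Since both functions are positive at $\mathbf{p}^{(i-1)}$ and agree in value, matching log-derivatives is equivalent to matching ordinary gradients.

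With these three properties established, the standard convergence argument proceeds as follows. At iteration $i$, Algorithm~2 solves a geometric program whose feasible region is contained in the original feasible region by (i), so $\mathbf{p}^{(i)}$ is feasible for \eqref{eq:ZF-maxprod-objective}. By (ii), the previous iterate $\mathbf{p}^{(i-1)}$ is itself feasible for the surrogate with the same objective value, so the optimal surrogate objective value sequence is monotonically nondecreasing. Combined with the compactness of the box $[0,P_{\max}]^{d}$, the iterates admit a convergent subsequence $\mathbf{p}^{(i)} \to \mathbf{p}^{\star}$, and $\mathbf{p}^{\star}$ is a fixed point of the surrogate map. Writing the KKT conditions of the surrogate GP at $\mathbf{p}^{\star}$ and using (iii) to substitute $\nabla F$ for $\nabla \tilde F$ yields exactly the KKT conditions of the original signomial program \eqref{eq:ZF-maxprod-objective}.

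The main obstacle is the careful verification of condition~(iii) for the nested structure in \eqref{eq:ZF-pilot power-denominator}: only the inner denominators of the fractions $f$ and $g$ are condensed via Lemma~\ref{lemma1}, while the outer posynomial combination is left untouched. One must check that this partial condensation still yields a first-order tight approximation of the overall SINR constraint, which reduces to showing that the constant term $1$ inside $1+\tau\sum p^{\rm p,c}\beta$ is handled consistently by the auxiliary weights $C$ and $D$ in \eqref{eq:weight_one_updates}; these are precisely the AM--GM weights associated with the constant monomial, so the book-keeping carries through and the composition of tight monomial bounds remains tight to first order, completing the argument.
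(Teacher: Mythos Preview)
Your proposal is correct and follows essentially the same route as the paper: both invoke the Marks--Wright successive approximation framework, verify the three tightness conditions (global under-approximation, value match, and gradient match at the current iterate) for the monomial surrogate of the posynomial denominators, and conclude monotone improvement plus convergence to a KKT point. Your write-up is in fact more explicit than the paper's own proof, which merely states the three conditions on $\widetilde{\rm SINR}^{b,\rm c}_{b,k}$ and cites \cite{marks1978general}; your AM--GM justification of (i)--(ii), the log-derivative argument for (iii), and the discussion of how the constants $C,D$ handle the unit monomial in the nested condensation are all correct elaborations that the paper leaves implicit.
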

\begin{proof}
	The lower bound provided for SINR expression of ZF processing at the BSs in \eqref{eq:appxSINR} fulfills the following three conditions: 
	\begin{enumerate}
	\item \begin{equation*}
	\hspace{-.2in}\widetilde{\rm SINR}^{b,\rm c}_{b,k} \left(p^{\rm p, c}_{b'',k''},p^{\rm p, d}_{l}\right) \leq {\rm SINR}^{b,c}_{b,k} \left(p^{\rm p, c}_{b'',k''},p^{\rm p, d}_{l}\right)~\forall k'',b'',l.\end{equation*}
	\item \begin{equation*}
	\begin{aligned}
	&\widetilde{\rm SINR}^{b,\rm c}_{b,k} \left(p^{\rm p, c}_{b'',k''}\left(i-1\right),p^{\rm p, d}_{l}\left(i-1\right)\right)\\ &= {\rm SINR}^{b,c}_{b,k} \left(p^{\rm p, c}_{b'',k''}\left(i-1\right),p^{\rm p, d}_{l}\left(i-1\right)\right) \quad \forall k'',b'',l,
		\end{aligned}\end{equation*} 
	where $p^{\rm p, c}_{b'',k''}\left(i-1\right)$ and $p^{\rm p, d}_{l}\left(i-1\right)$ are the solution at the previous iteration.
	\item
	\begin{equation*}
	\begin{aligned}
	&\nabla\widetilde{\rm SINR}^{b,\rm c}_{b,k} \left(p^{\rm p, c}_{b'',k''}\left(i-1\right),p^{\rm p, d}_{l}\left(i-1\right)\right) \\
	&= \nabla{\rm SINR}^{b,c}_{b,k} \left(p^{\rm p, c}_{b'',k''}\left(i-1\right),p^{\rm p, d}_{l}\left(i-1\right)\right).
	\end{aligned}
	\end{equation*}
\end{enumerate}
The first condition guarantees the feasibility of the solution of the optimization problem~\eqref{eq:opti max product}, when applying the lower bound for the original optimization problem with actual signomial constraint. The second condition ensures the solution from the previous iteration is also feasible for the current iteration while increasing the optimization objective. If the convergence criteria is satisfied the optimal solution of previous iteration is also the solution of current iteration. As the SINR expression is bounded from above and it is a continuous function of $p^{\rm p, c}_{b'',k''}$ and $p^{\rm p, d}_{l}$, it ensures that the iterative algorithm converges to a limit point. In addition, the solution of optimization problem with the lower bounded SINR constraint satisfies the Slater's condition. Hence, the limit point of the successive approximation algorithm after convergence is a local optimal KKT point \cite{marks1978general,van2017joint}.	
\end{proof}
Each iteration of {{Algorithm}}~$2$ consists of solving optimization problem \eqref{eq:opti max product} which has $3(KB+L)$ variables and the same number of constraints. Hence the order of computational complexity of the algorithm is 
	\begin{equation}
	\begin{aligned}
	\mathcal{O}\left(N\max\left\{\left(3L\left(K+1\right)\right)^3, F\right\}\right)
	\end{aligned}
	\end{equation}
	where $N$ is the required number of iteration for convergence of the algorithm and $F$ is the cost of evaluating the first and second derivatives of the objective and constraint functions in \eqref{eq:opti max product} \cite[Ch.~1]{Boyd}. 
Note that the same solution approach can be used for the case of max-min fairness with joint pilot and data power control for ZF processing at the BSs. Therefore, it is omitted to avoid unnecessary repetition in the content.

\section{Numerical Analysis}
In this section, we provide a numerical evaluation of the performance of the power control algorithm proposed in Section~\ref{optimization}. Note that the exact expressions of the SEs are used to compute the SEs. The simulation setup consists of a multi-cell network with $9$ cells as illustrated in Fig.~\ref{fig:model_sim}. 
\begin{figure}[hptb!]
	\centering
	\centerline{\includegraphics[width=0.6\columnwidth]{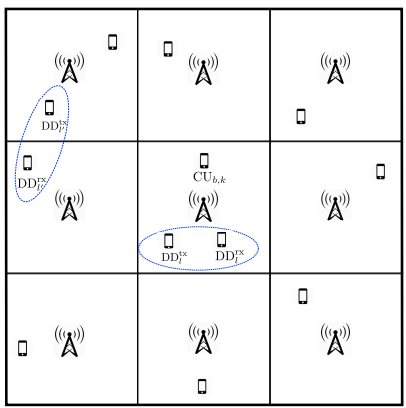}}
	\caption{The simulation setup with a cellular network Massive MIMO system that is underlaid by D2D communications.}
	\label{fig:model_sim}
\end{figure}
In addition, we use a wrap-around topology to avoid edge effects. We assume that the BSs are located in a $1$\,km$^2$ area. Each BS serves 5 CUs which are randomly distributed with uniform distribution in the BS's coverage area. The network contains $10$ D2D pairs randomly distributed in the coverage area with uniform distribution. Since the pairs are distributed randomly, some cells will be more affected by D2D interference than others. We further assume that the distance between the transmitter and receiver of each D2D pair is $10$ meters (note that the effect of D2D distance on spectral efficiency of CUs is further investigated in Appendix \ref{appendix:D2Ddisance}). We use the three-slope pathloss model from \cite{AoTang} to model the large-scale fading in the network. This model contains two reference distances $d_0$ and $d_1$, which are chosen to be $10\,$ and $50\,$ meters, respectively.
The carrier frequency is $2\,$GHz, the bandwidth is $20\,$MHz \cite{3gpp}, and the coherence interval contains $200$ samples. The total number of pilots assigned for the D2D pairs is $N=5$. First, we randomly select 5 D2D pairs and randomly assign a unique pilot sequence to each of them. For the remaining D2D transmitters, we randomly select and reuse a pilot sequence that is already used in the first group. Therefore, for the current setup, the minimum number of D2D pairs that is using a pilot sequence is one and the maximum is six. In addition, the noise variance is set to~$-94\,$dBm\footnote{This is calculated using the noise power spectral density at room temperature multiplied with the $20\,$MHz bandwidth and a
noise figure of $7\,$dB in the receiver hardware.} and the maximum transmit power of the CUs and D2D transmitters are chosen to be~$P_{\rm max}=200\,$mW for both pilot and data transmission.

\subsection{Optimize data power control}
In this subsection, we provide simulation results for data power control while all CUs and D2D transmitters use maximum power for pilot transmission. Figs.~\ref{fig:SE CU}~and~\ref{fig:SE CU ZF}~show the cumulative distribution function (CDF) of the SE of a typical CU $k$ for MR and ZF processing at the BSs, respectively, where the randomness is due to different user locations. We compare the proposed max-min data power control from Section~\ref{Data power control}~and max product SINR power control from Section \ref{max-prod data} with three baseline methods:\footnote{The latter two schemes are obtained as special cases of the proposed schemes.}
\begin{itemize}
\item Equal power control (using full power);
\item  Max-min data power control when there are only CUs;
\item Max product SINR power control when there are only CUs.
\end{itemize}
It can be seen that the performance of our proposed max-min and max product SINR algorithms are almost the same as in the corresponding case without underlaid D2D communication for both MR and ZF. Thus the power control can efficiently mitigate the D2D interference. For the max-min power control, comparing the results with the case of maximum equal power transmission, max-min power control improves the performance of the~$40\%$~weakest users for MR and $0.064\%$ of the weakest users for ZF processing. This is in line with the goal of max-min fairness power control, but it is clear that ZF by itself provides good performance for the most unfortunate CUs.

The proposed max product SINR power control algorithm improves the performance of $80\%$ of users comparing with full power transmission when we have MR processing at the BSs. For the case of max product SINR with ZF processing at the BSs, there are only minor differences in SE in this case. Since ZF can efficiently mitigate the interference, making power control is less important in this case. This shows that with max product SINR, we get almost the same performance as with full power transmission. In fact, ZF mitigates the interference so well that everyone can transmit at full power in the network (Appendix \ref{appendix:power_coeff} provides a further discussion on the effect of ZF processing on the data power coefficients of the D2D transmitters and CUs).
\begin{figure}[b!]
	\centering
	\begin{subfigure}[b]{0.8\columnwidth} \centering 
		\includegraphics[width=1\columnwidth]{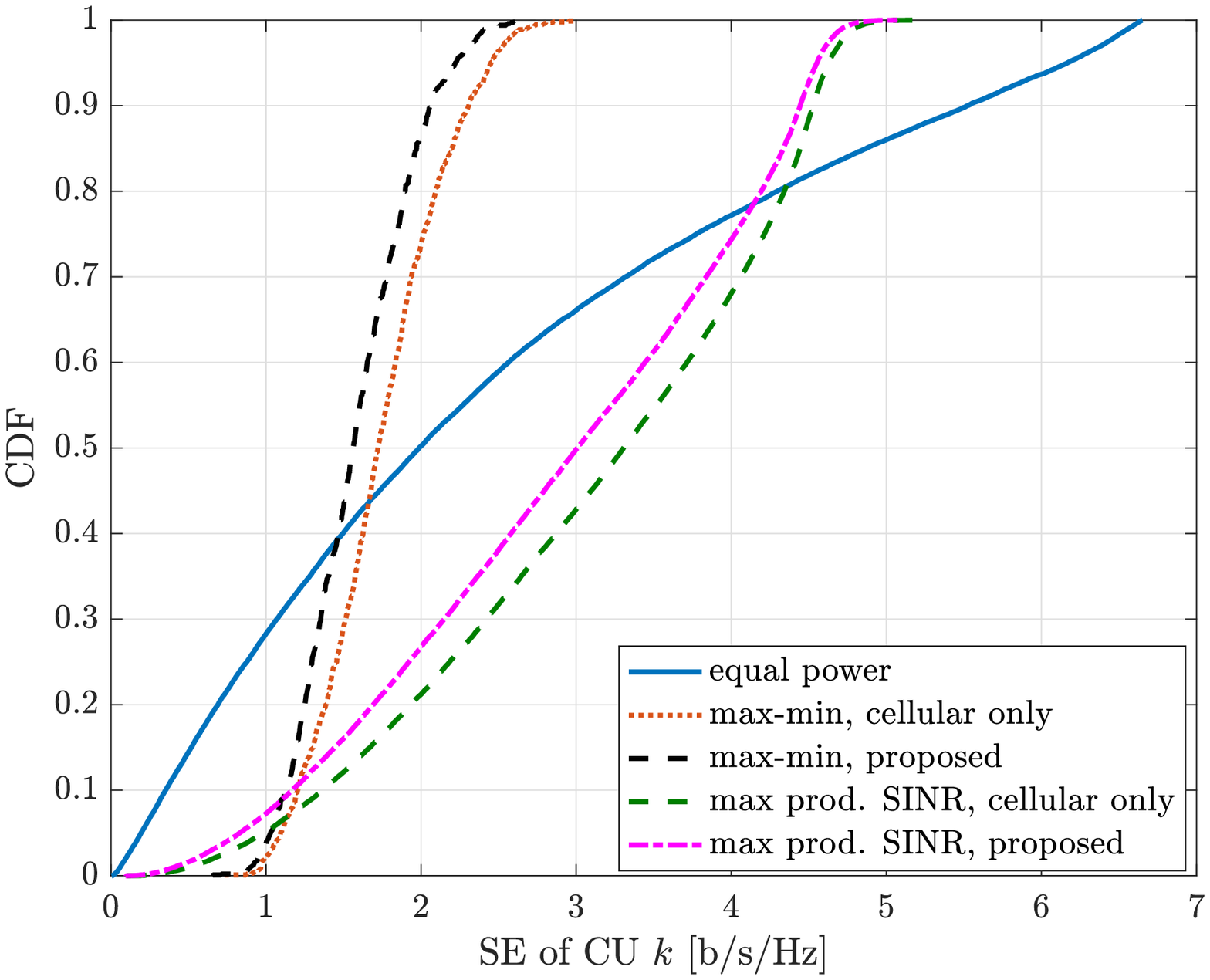} 
		\caption{MR processing}
		\label{fig:SE CU}
	\end{subfigure} 
	\begin{subfigure}[b]{0.8\columnwidth} \centering
		\includegraphics[width=1\columnwidth]{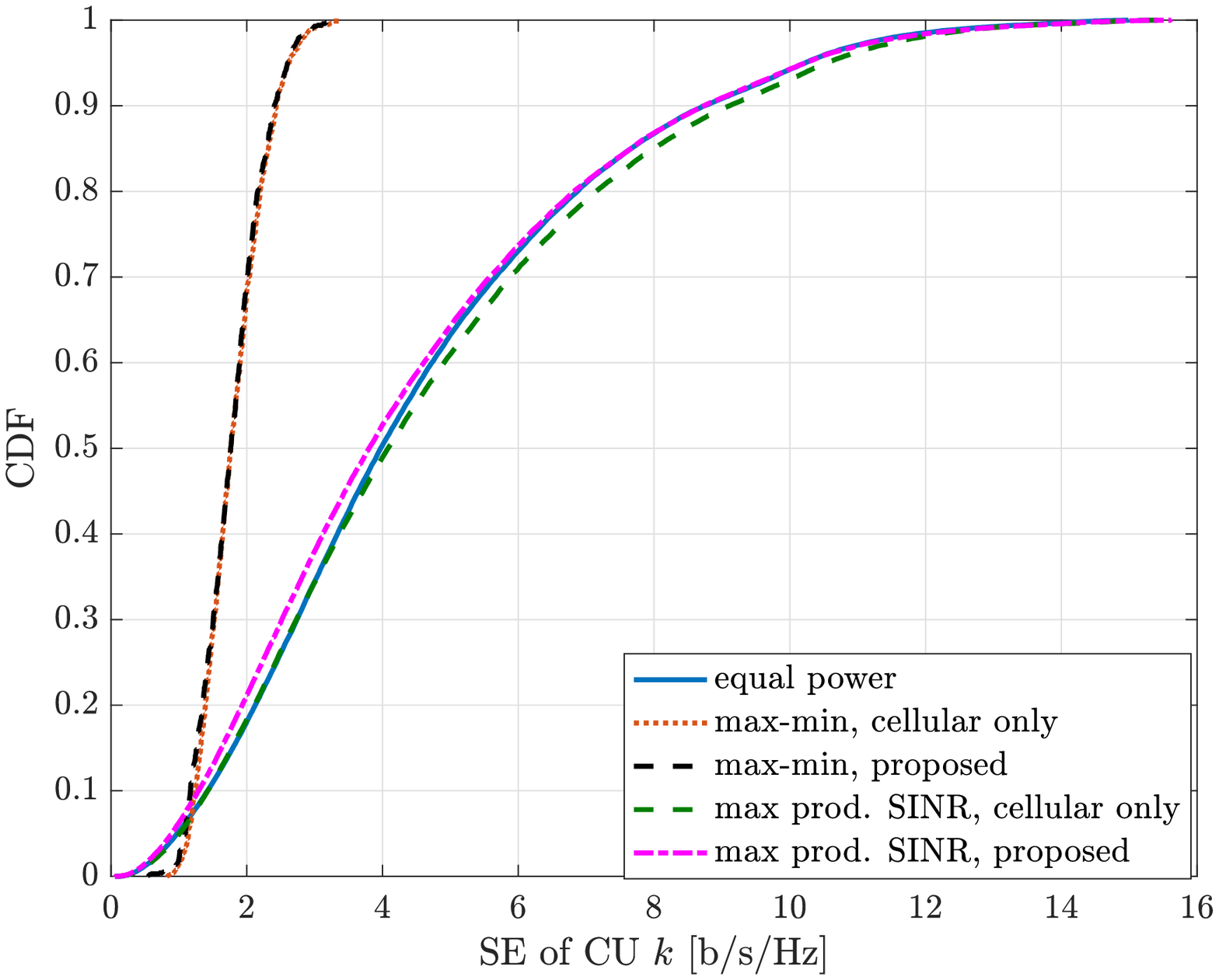} 
		\caption{ ZF processing} 
		\label{fig:SE CU ZF}
	\end{subfigure}
	\caption{SE of arbitrary CU $k$ with data power control.}
	\label{figureDownlinkSimulationK10}  
\end{figure} 
From~Fig.~\ref{fig:SE CU}~and Fig.~\ref{fig:SE CU ZF}~it appears as if the SE of the system reduces when adding D2D users, but that is not the case. Fig.~\ref{fig:Sum SE}~and Fig.~\ref{fig:Sum SE ZF}, show the sum SE of all users in the network when using MR and ZF processing, respectively. It shows higher SE with D2D underlaying. This is due to the fact that we have more active users in the network with D2D underlay. Our proposed max product SINR power control shows a significant improvement in the sum SE for the case of D2D underlaying.
\begin{figure}[hbpt!]
	\centering
	\begin{subfigure}[b]{0.8\columnwidth} \centering 
		\includegraphics[width=\columnwidth]{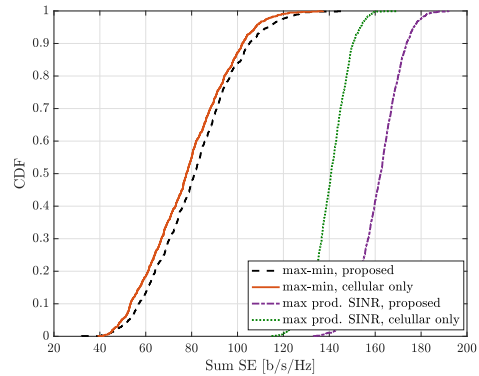} 
		\caption{MR processing}  \vspace{-1mm}
		\label{fig:Sum SE}
	\end{subfigure}  \hspace{3mm}
	\begin{subfigure}[b]{0.8\columnwidth} \centering
		\includegraphics[width=\columnwidth]{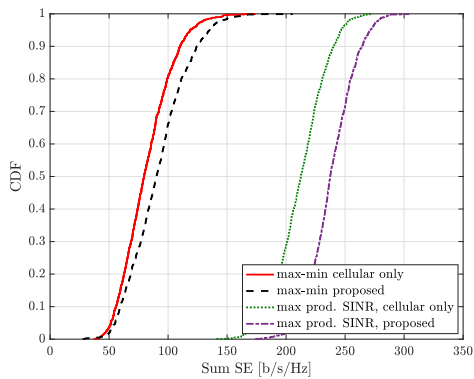} 
		\caption{ ZF processing} \vspace{-3mm}
		\label{fig:Sum SE ZF}
	\end{subfigure}  
	\caption{Sum SE of the network with data power control.}
	\label{fig:Sum SE all}  
\end{figure}

\subsection{Optimized joint pilot and data power control}

Next, we will jointly optimize the pilot and data powers. Fig.~\ref{fig:MRC data versus pilot}  and Fig.~\ref{fig:ZF data versus pilot} show the SE of CU $k$ for both cases of max-min and max product SINR power control with MR and ZF processing at the BSs, respectively. The figures compare the performance with joint pilot and data power control (denoted as ``joint pilot \& data PC'') with the case of only data power control (denoted as ``data PC''). Note that PC stands for power control in the figures. For both MR and ZF processing, the gain from joint pilot and data power control is larger when having the max-min optimization objective. Note that the stopping criterion value $\epsilon'$ in {Algorithm}~$2$ is set to $0.001$ in the simulations.

The sum SE is shown in Fig.~\ref{fig:Sum MRC data versus pilot}~and Fig.~\ref{fig:Sum ZF data versus pilot}, for the cases of max-min and max product SINR power control with MR and ZF processing at the BSs, respectively. In both cases, the figures show that joint pilot and data power control for max-min optimization objective enhances the SE of the network, as compared to only data power control.  When we use max product SINR as the optimization objective, we have clear gains for MR processing which can be seen in Fig.~\ref{fig:Sum MRC data versus pilot}.  But in the ZF processing case in Fig.~\ref{fig:Sum ZF data versus pilot} the curves are almost the same which reveals that ZF processing is capable of mitigating interference efficiently irrespectively of the power control. We observe from Fig.~\ref{fig:data versus pilot} and Fig.~\ref{fig:Sum data versus pilot} that the gain from joint pilot and data power control, as compared to data power control only, is larger when using MR than with ZF. This is due to the fact that ZF can efficiently mitigate in interference spatially, if the channel estimates are good.
\begin{figure}[b !]
	\centering
	\begin{subfigure}[b]{0.8\columnwidth} \centering 
		\includegraphics[width=\columnwidth]{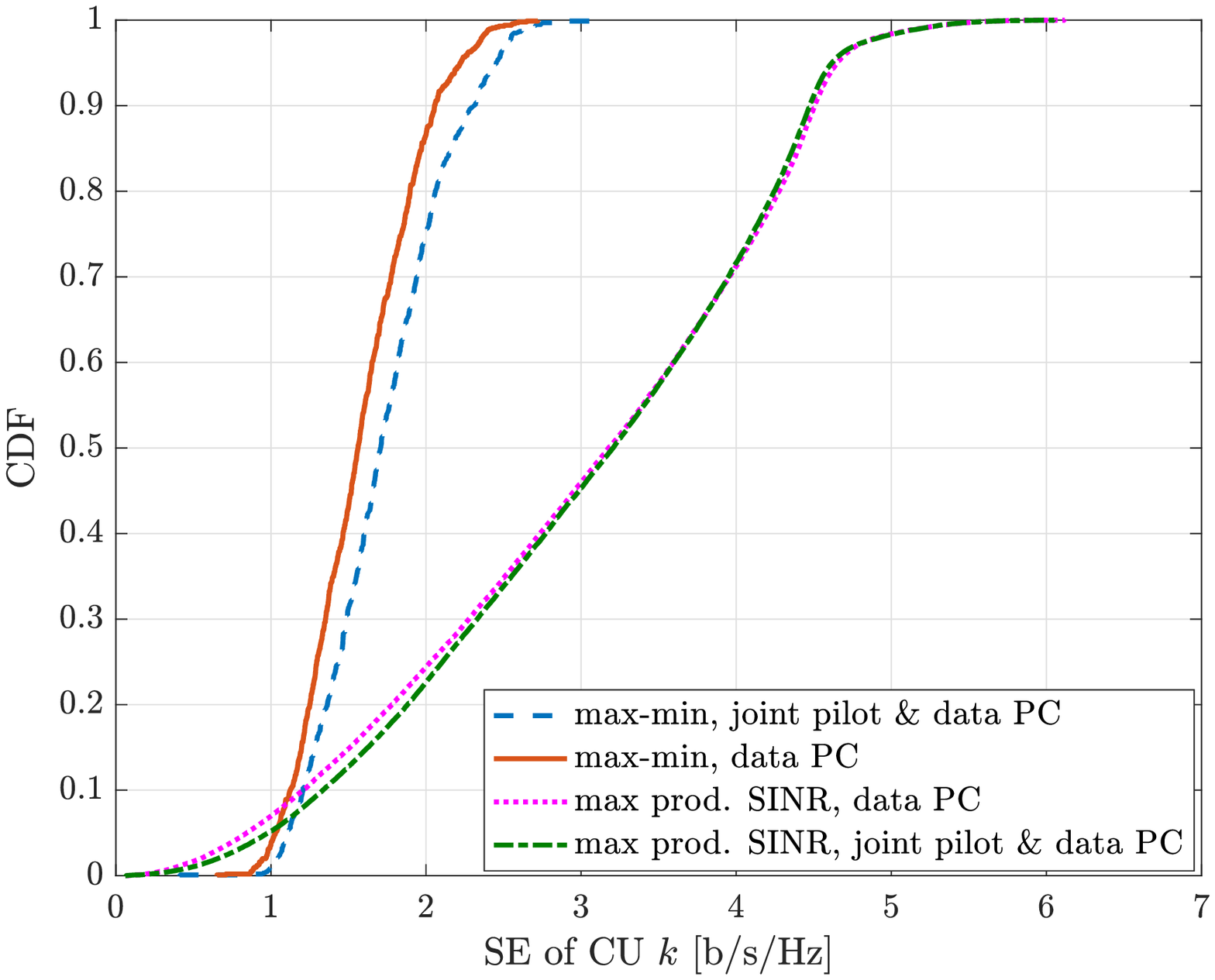} 
		\caption{MR processing}  \vspace{-3mm}
		\label{fig:MRC data versus pilot}
	\end{subfigure}
	\begin{subfigure}[b]{0.8\columnwidth} \centering
		\includegraphics[width=\columnwidth]{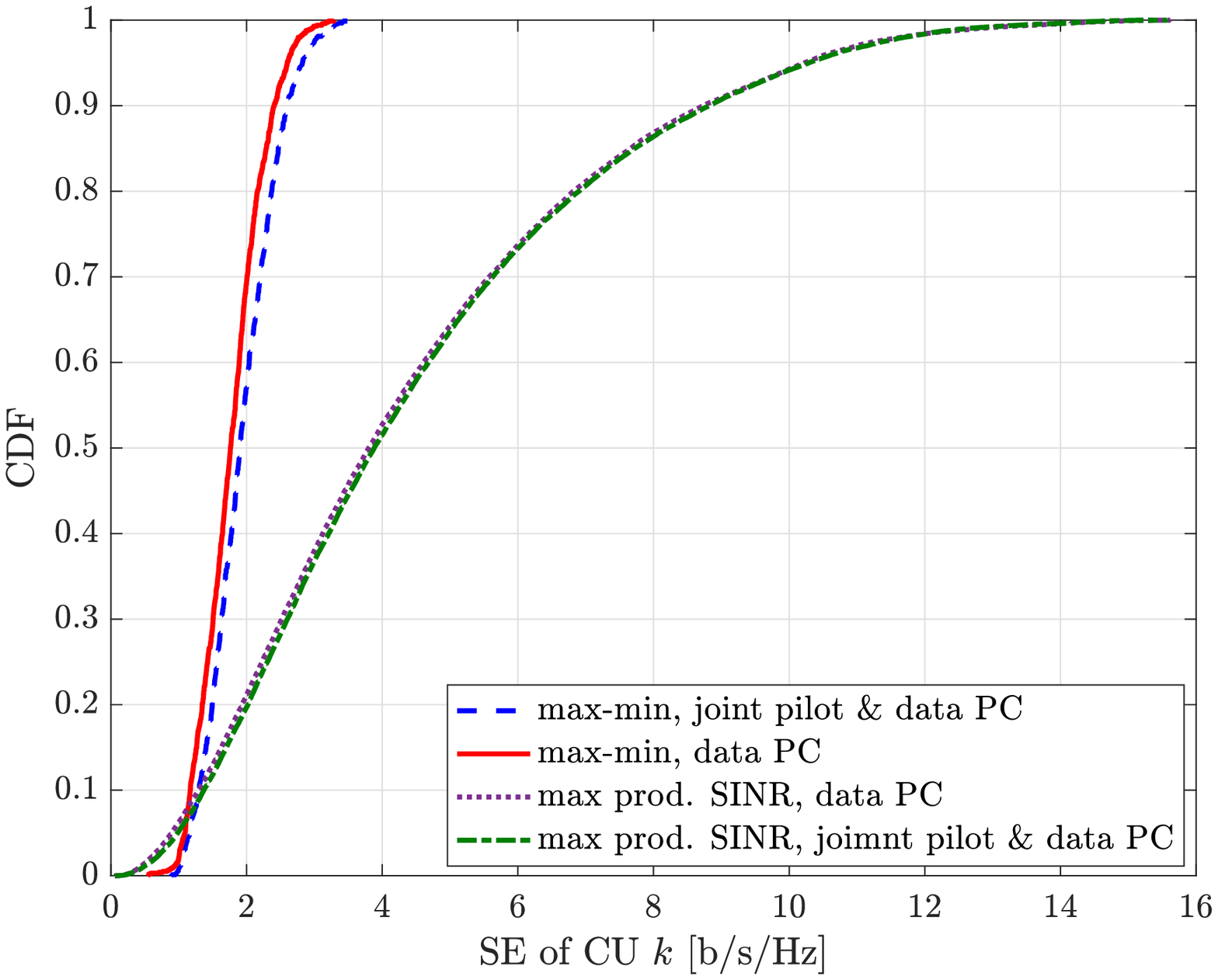} 
		\caption{ZF processing}  \vspace{-3mm}
		\label{fig:ZF data versus pilot}
	\end{subfigure} 
	\caption{SE of CU $k$ with data power control versus joint pilot and data power control.}
	\label{fig:data versus pilot}  
\end{figure}
\begin{figure}[hbpt!]
	\centering
	\begin{subfigure}[b]{0.8\columnwidth} \centering 
		\includegraphics[width=\columnwidth]{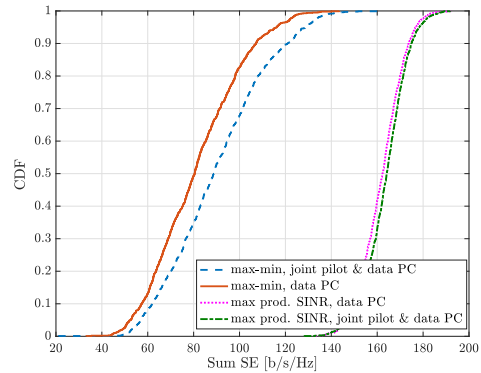} 
		\caption{MR processing}  \vspace{-3mm}
		\label{fig:Sum MRC data versus pilot}
	\end{subfigure} 
	\begin{subfigure}[b]{0.8\columnwidth} \centering
		\includegraphics[width=\columnwidth]{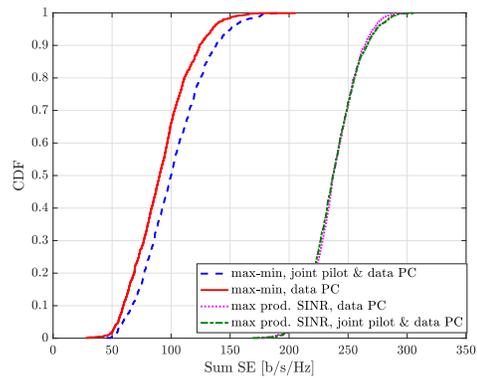} 
		\caption{ZF processing}  \vspace{-3mm}
		\label{fig:Sum ZF data versus pilot}
	\end{subfigure} 
	\caption{Sum SE of the network with data power control versus joint pilot and data power control.}
	\label{fig:Sum data versus pilot}  
\end{figure}

\section{Conclusion}
In this paper, we have presented a framework for pilot allocation for multi-cell massive MIMO systems with underlaid D2D communication. Different orthogonal sets of pilots were used for cellular and D2D communication in order to mitigate extra interference at the cellular BSs. First, we derived the SE expressions for D2D pairs and the SE expression of CUs in the system for both MR or ZF processing at the BSs. Then, we proposed different power control schemes for the considered system model and solved max-min fairness and max product SINR optimization problems for MR and ZF processing at the BSs. We considered both data power control and joint pilot and data power control for both optimization objectives. Most problems were solved to global optimality, but the non-convex joint pilot and data power control with ZF processing at the BSs required an iterative convex approximation algorithm which only provides a local optimal point. The simulation results show that all the algorithms can effectively limit the interference between CUs and D2D pairs to enable D2D underlaying with limited performance degradation for the CUs. The joint pilot and data power control enhances the SE of the network in comparison with data power control for max-min optimization objective for both ZF and MR processing. Interestingly, the optimal power control with max product SINR optimization is rather similar to using full power. In addition, the proposed joint pilot and data power control provide larger SE gains in the case of MR processing comparing with ZF processing.

\appendix

\subsection{Tightness of the approximate SE for D2D communication}\label{appendix:Approximate-SE}
Since the SE expression for D2D communication was not in closed form, thus an approximation was provided in \eqref{eq:app bound dd dedicated final} to enable tractable power control optimization. Fig.~\ref{fig:D2D bound} investigates the tightness of the approximation. We note that the approximation error is negligible for SEs less then $3\,$b/s/Hz, which is the range of main interest for max-min fairness optimization. For larger values of the SE, the approximation overestimates the SE by up to $0.5\,$b/s/Hz. Note that the D2D users are in relatively low SNR regime which is the reason for $0.5\,$b/s/Hz gap. When using the approximation for power control optimization, the D2D users will therefore get somewhat lower SE than what the output of the optimization problem predicts.
\begin{figure}[t!]
	\centering
	\includegraphics[width=0.8\columnwidth]{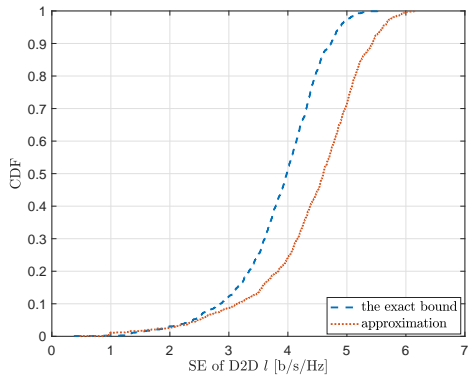}
	\caption{The approximate closed-form SE expression for D2D communication is tight for SEs less then $3\,$b/s/Hz.}
		\label{fig:D2D bound}
\end{figure}

\subsection{Basics of Geometric Programming}\label{appendix:Useful definitions}
In this appendix, we provide the basics of geometric programming. We start by defining two types of functions. A monomial function is defined as
\begin{equation}
\begin{aligned}
f({\mathbf{u}}) =& a \prod_{i} u^{b_i}_{i},\\
\end{aligned}
\end{equation}
where $a >0$, the exponents $b_i$ are real-valued, and $\mathbf{u}$ is a vector where each element $u_i$ is the real and positive \cite{boyd2007tutorial}. A posynomial function is formulated as the sum of one or multiple monomial functions and is written as
\begin{equation}
\begin{aligned}
f\left({\mathbf{u}}\right) = \sum_{k} a_k \prod_{i} u^{b_{i,k}}_{i},\\
\end{aligned}
\end{equation}
where~$a_k >0$ and the exponents $b_{i,k}$ are real-valued~\cite{boyd2007tutorial}. Adding or multiplying two posynomial functions produces a posynomial function as well. The result of dividing a posynomial function by a monomial function is also a posynomial. Dividing two posynomials result in a signomial function. This function has the same form as posynomial but the constants $a_k$ can be negative as well~\cite{boyd2007tutorial}.

A geometric program on standard form is expressed as
\begin{maxi!}[2]
	{\substack{\mathbf{u}\geq 0}}{f_0\left({\mathbf{u}}\right) \label{eq:gp-opti-objective}}
	{\label{eq:gp}}{}
	\addConstraint{ f_i\left({\mathbf{u}}\right)\leq 1,~~\forall i \label{eq:C1gp}}{}{}
	\addConstraint{ g_i\left({\mathbf{u}}\right) =  1,~~\forall i, \label{eq:C2gp}}{}{}
\end{maxi!}
where the objective function $f_0\left({\mathbf{u}}\right)$ and the constraint functions $f_i\left({\mathbf{u}}\right)$ are posynomials, while the constraint functions $g_i\left({\mathbf{u}}\right)$ are monomial~\cite{boyd2007tutorial}.

\subsection{Effect of D2D distance on spectral efficiency}\label{appendix:D2Ddisance}
The distance between the transmitter and receiver of each D2D pair is one of the parameters which affects the SE of CUs. By increasing this distance, the interference from D2D transmitters to the CUs increases. The reason behind this is that D2D transmitters have to transmit with higher powers to compensate for the larger pathloss. Fig.~\ref{fig:D2D_distance} shows the effect of increasing the D2D
distances on the SE of CU $k$ with MR processing at the BSs. It shows that by increasing this
distance from $10\,$m to $100\,$m, the SE of CU $k$ decreases significantly which verifies the importance of the distance between the transmitter and receiver of D2D pairs on the SE of CUs. $10\,$m is in range of the typical distances between transmitter and receiver of D2D pairs which has been considered in the literature \cite{shalmashi2016energy,liu2017pilot,Xu2017, He2017SE}. The results in Fig.~\ref{fig:D2D_distance} verify that this distance is reasonable as it has a negligible effect on the SE of CU $k$.
\begin{figure}[hbpt!]
	\centering
	\includegraphics[width=0.8\columnwidth]{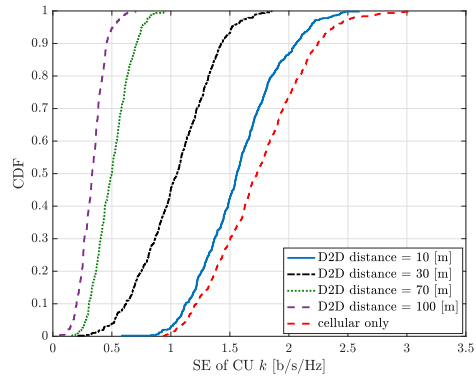}
	\caption{The effect of D2D distance on the SE of CU $k$.}
	\label{fig:D2D_distance}
\end{figure}

\subsection{Effect of ZF processing on data power coefficients}\label{appendix:power_coeff}
Fig.~\ref{fig:powerCoefficient} shows the power coefficients of the D2D transmitters and CUs for max SINR product and max-min fairness with ZF processing at the BSs. The results show that for the case of max product SINR the D2D transmitters and CUs can transmit with full power as ZF can efficiently mitigate the interference. This is the reason behind the similarity between the performance of this case and full power transmission.
\begin{figure}[hptb!]
	\centering
	\includegraphics[width=0.8\columnwidth]{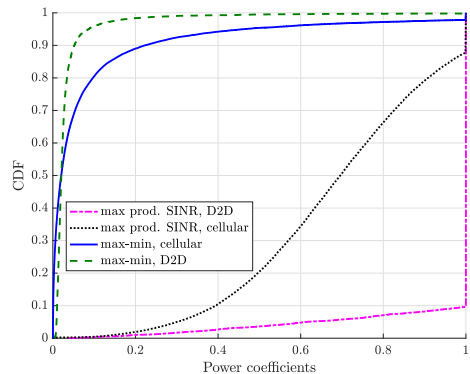}
	\caption{The power coefficients of the D2D transmitters and CUs for max SINR product and max-min fairness with ZF processing at the BSs.}
	\label{fig:powerCoefficient}
\end{figure}
\bibliographystyle{IEEEtran}

\bibliography{di}

\begin{thebibliography}{10}
\providecommand{\url}[1]{#1}
\csname url@rmstyle\endcsname
\providecommand{\newblock}{\relax}
\providecommand{\bibinfo}[2]{#2}
\providecommand\BIBentrySTDinterwordspacing{\spaceskip=0pt\relax}
\providecommand\BIBentryALTinterwordstretchfactor{4}
\providecommand\BIBentryALTinterwordspacing{\spaceskip=\fontdimen2\font plus
\BIBentryALTinterwordstretchfactor\fontdimen3\font minus
  \fontdimen4\font\relax}
\providecommand\BIBforeignlanguage[2]{{%
\expandafter\ifx\csname l@#1\endcsname\relax
\typeout{** WARNING: IEEEtran.bst: No hyphenation pattern has been}%
\typeout{** loaded for the language `#1'. Using the pattern for}%
\typeout{** the default language instead.}%
\else
\language=\csname l@#1\endcsname
\fi
#2}}

\bibitem{amin2018wsa}
A.~Ghazanfari, E.~Bj{\"o}rnson, and E.~G. Larsson, ``Power control for {D2D}
  underlay in multi-cell massive {MIMO} networks,'' in \emph{Proc. ITG Workshop
  on Smart Antennas (WSA)}, 2018, pp. 1--6.

\bibitem{WWB5G}
J.~G. Andrew, S.~Buzzi, W.~Choi, S.~V. Hanly, A.~Lozano, A.~C.~K. Soong, and
  J.~C. Zhang, ``What will {5G} be?'' \emph{IEEE J. Sel. Areas Commun.},
  vol.~32, no.~6, pp. 1065--1082, 2014.

\bibitem{asadi2014survey}
A.~Asadi, Q.~Wang, and V.~Mancuso, ``A survey on device-to-device communication
  in cellular networks,'' \emph{Commun. Surveys Tuts.}, vol.~16, no.~4, pp.
  1801--1819, 2014.

\bibitem{doppler2009device}
K.~Doppler, M.~Rinne, C.~Wijting, C.~B. Ribeiro, and K.~Hugl,
  ``Device-to-device communication as an underlay to {LTE}-advanced networks,''
  \emph{IEEE Commun. Mag.}, vol.~47, no.~12, pp. 42--49, 2009.

\bibitem{lin2014overview}
X.~Lin, J.~Andrews, A.~Ghosh, and R.~Ratasuk, ``An overview of {3GPP}
  device-to-device proximity services,'' \emph{IEEE Commun. Mag.}, vol.~52,
  no.~4, pp. 40--48, 2014.

\bibitem{janis2009interference}
P.~Janis, V.~Koivunen, C.~Ribeiro, J.~Korhonen, K.~Doppler, and K.~Hugl,
  ``Interference-aware resource allocation for device-to-device radio
  underlaying cellular networks,'' in \emph{Proc. IEEE Veh. Tech. Conf. (VTC)},
  2009, pp. 1--5.

\bibitem{AminICC}
A.~Ghazanfari, A.~T{\"o}lli, and J.~Kaleva, ``Joint power loading and mode
  selection for network-assisted device-to-device communication,'' in
  \emph{Proc. IEEE Int. Conf. on Commun. (ICC)}, 2015, pp. 2548--2553.

\bibitem{fodor2012design}
G.~Fodor, E.~Dahlman, G.~Mildh, S.~Parkvall, N.~Reider, G.~Mikl{\'o}s, and
  Z.~Tur{\'a}nyi, ``Design aspects of network assisted device-to-device
  communications,'' \emph{IEEE Commun. Mag.}, vol.~50, no.~3, pp. 170--177,
  2012.

\bibitem{redbook}
T.~L. Marzetta, E.~G. Larsson, H.~Yang, and H.~Q. Ngo, \emph{Fundamentals of
  Massive {MIMO}}.\hskip 1em plus 0.5em minus 0.4em\relax Cambridge University
  Press, 2016.

\bibitem{bjornson2017massive}
E.~Bj{\"o}rnson, J.~Hoydis, and L.~Sanguinetti, \emph{Massive {MIMO} Networks:
  Spectral, Energy, and Hardware Efficiency}.\hskip 1em plus 0.5em minus
  0.4em\relax Now Publishers, Inc., 2017, vol.~11, no. 3-4.

\bibitem{larsson2014massive}
E.~G. Larsson, O.~Edfors, F.~Tufvesson, and T.~L. Marzetta, ``Massive {MIMO}
  for next generation wireless systems,'' \emph{IEEE Commun. Mag.}, vol.~52,
  no.~2, pp. 186--195, 2014.

\bibitem{shalmashi2016energy}
S.~Shalmashi, E.~Bj{\"o}rnson, M.~Kountouris, K.~S. Won, and M.~Debbah,
  ``Energy efficiency and sum rate tradeoffs for massive {MIMO} systems with
  underlaid device-to-device communications,'' \emph{EURASIP J. on Wireless
  Comm. and Netw.}, vol. 2016, no.~1, pp. 175--193, 2016.

\bibitem{amin2015power}
B.~S. Amin, Y.~R. Ramadan, A.~S. Ibrahim, and M.~H. Ismail, ``Power allocation
  for device-to-device communication underlaying massive {MIMO} multicasting
  networks,'' in \emph{Proc. IEEE Wireless Commun. Network. Conf. (WCNC)},
  2015, pp. 1219--1224.

\bibitem{liu2017pilot}
X.~Liu, Y.~Li, X.~Li, L.~Xiao, and J.~Wang, ``Pilot reuse and
  interference-aided {MMSE} detection for {D2D} underlay massive {MIMO},''
  \emph{IEEE Trans. Veh. Technol.}, vol.~66, no.~4, pp. 3116--3130, 2017.

\bibitem{xu2016power}
H.~Xu, Z.~Yang, B.~Wu, J.~Shi, and M.~Chen, ``Power control in {D2D} underlay
  massive {MIMO} systems with pilot reuse,'' in \emph{Proc. IEEE Veh. Tech.
  Conf. (VTC)}, 2016, pp. 1--5.

\bibitem{xu2017pilot}
H.~Xu, W.~Xu, Z.~Yang, J.~Shi, and M.~Chen, ``Pilot reuse among {D2D} users in
  {D2D} underlaid massive {MIMO} systems,'' \emph{IEEE Trans. Veh. Technol.},
  vol.~67, no.~1, pp. 467--482, 2018.

\bibitem{hau2017pilot}
H.~Xu, N.~Huang, Z.~Yang, J.~Shi, B.~Wu, and M.~Chen, ``Pilot allocation and
  power control in {D2D} underlay massive {MIMO} systems,'' \emph{IEEE Commun.
  Lett.}, vol.~21, no.~1, pp. 112--115, 2017.

\bibitem{Xu2017}
\BIBentryALTinterwordspacing
S.~Xu, H.~Zhang, J.~Tian, and P.~T. Mathiopoulos, ``Pilot reuse and power
  control of {D2D} underlaying massive {MIMO} systems for energy efficiency
  optimization,'' \emph{Science China Information Sciences}, vol.~60, no.~10,
  pp. 100\,303:1--100\,303:15, 2017. [Online]. Available:
  \url{https://doi.org/10.1007/s11432-017-9194-y}
\BIBentrySTDinterwordspacing

\bibitem{lin2015interplay}
X.~Lin, R.~W. Heath, and J.~G. Andrews, ``The interplay between massive {MIMO}
  and underlaid {D2D} networking,'' \emph{IEEE Trans. Wireless Commun.},
  vol.~14, no.~6, pp. 3337--3351, 2015.

\bibitem{He2017SE}
A.~He, L.~Wang, Y.~Chen, K.-K. Wong, and M.~Elkashlan, ``Spectral and energy
  efficiency of uplink {D2D} underlaid massive {MIMO} cellular networks,''
  \emph{IEEE Trans. Commun.}, vol.~65, no.~9, pp. 3780--3793, 2017.

\bibitem{kailath2000linear}
T.~Kailath, A.~H. Sayed, and B.~Hassibi, \emph{Linear estimation}.\hskip 1em
  plus 0.5em minus 0.4em\relax Prentice-Hall, 2000, vol.~1.

\bibitem{kay1993fundamentals}
S.~M. Kay, \emph{Fundamentals of statistical signal processing}.\hskip 1em plus
  0.5em minus 0.4em\relax Prentice-Hall, 1993.

\bibitem{cvx}
M.~Grant and S.~Boyd, ``{CVX}: Matlab software for disciplined convex
  programming, version 2.1,'' \url{http://cvxr.com/cvx}, Mar. 2014.

\bibitem{Boyd}
S.~Boyd and L.~Vandenberghe, \emph{Convex Optimization}.\hskip 1em plus 0.5em
  minus 0.4em\relax New York, NY, USA: Cambridge University Press, 2004.

\bibitem{gpPowercont}
M.~Chiang, C.~W. Tan, D.~P. Palomar, D.~O'neill, and D.~Julian, ``Power control
  by geometric programming,'' \emph{IEEE Trans. Wireless Commun.}, vol.~6,
  no.~7, pp. 2640--2651, 2007.

\bibitem{marks1978general}
B.~R. Marks and G.~P. Wright, ``A general inner approximation algorithm for
  nonconvex mathematical programs,'' \emph{Operations research}, vol.~26,
  no.~4, pp. 681--683, 1978.

\bibitem{van2017joint}
T.~V. Chien, E.~Bj{\"o}rnson, and E.~G. Larsson, ``Joint pilot design and
  uplink power allocation in multi-cell massive {MIMO} systems,'' \emph{IEEE
  Trans. Wireless Commun.}, vol.~17, no.~3, pp. 2000--2015, 2018.

\bibitem{AoTang}
A.~Tang, J.~Sun, and K.~Gong, ``Mobile propagation loss with a low base station
  antenna for {NLOS} street microcells in urban area,'' vol.~1, pp. 333--336,
  2001.

\bibitem{3gpp}
{3GPP TR 36.843 V12.0.1.}, ``Study on {LTE} device to device proximity
  services; radio aspects,'' 2014.

\bibitem{boyd2007tutorial}
S.~Boyd, S.-J. Kim, L.~Vandenberghe, and A.~Hassibi, ``A tutorial on geometric
  programming,'' \emph{Optimization and engineering}, vol.~8, no.~1, p.~67,
  2007.

\end{thebibliography}

\end{document}